%%
%% This is file `sample-sigconf.tex',
%% generated with the docstrip utility.
%%
%% The original source files were:
%%
%% samples.dtx  (with options: `sigconf')
%% 
%% IMPORTANT NOTICE:
%% 
%% For the copyright see the source file.
%% 
%% Any modified versions of this file must be renamed
%% with new filenames distinct from sample-sigconf.tex.
%% 
%% For distribution of the original source see the terms
%% for copying and modification in the file samples.dtx.
%% 
%% This generated file may be distributed as long as the
%% original source files, as listed above, are part of the
%% same distribution. (The sources need not necessarily be
%% in the same archive or directory.)
%%
%% Commands for TeXCount
%TC:macro \cite [option:text,text]
%TC:macro \citep [option:text,text]
%TC:macro \citet [option:text,text]
%TC:envir table 0 1
%TC:envir table* 0 1
%TC:envir tabular [ignore] word
%TC:envir displaymath 0 word
%TC:envir math 0 word
%TC:envir comment 0 0
%%
%%
%% The first command in your LaTeX source must be the \documentclass command.
\documentclass[sigconf]{acmart}
%% NOTE that a single column version is required for 
%% submission and peer review. This can be done by changing
%% the \doucmentclass[...]{acmart} in this template to 
%% \documentclass[manuscript,screen]{acmart}
%% 
%% To ensure 100% compatibility, please check the white list of
%% approved LaTeX packages to be used with the Master Article Template at
%% https://www.acm.org/publications/taps/whitelist-of-latex-packages 
%% before creating your document. The white list page provides 
%% information on how to submit additional LaTeX packages for 
%% review and adoption.
%% Fonts used in the template cannot be substituted; margin 
%% adjustments are not allowed.

%%
%% \BibTeX command to typeset BibTeX logo in the docs
\AtBeginDocument{%
  \providecommand\BibTeX{{%
    \normalfont B\kern-0.5em{\scshape i\kern-0.25em b}\kern-0.8em\TeX}}}

%% Rights management information.  This information is sent to you
%% when you complete the rights form.  These commands have SAMPLE
%% values in them; it is your responsibility as an author to replace
%% the commands and values with those provided to you when you
%% complete the rights form.
\setcopyright{acmcopyright}
\copyrightyear{2018}
\acmYear{2018}
\acmDOI{XXXXXXX.XXXXXXX}

%% These commands are for a PROCEEDINGS abstract or paper.
\acmConference[Conference acronym 'XX]{Make sure to enter the correct
  conference title from your rights confirmation emai}{June 03--05,
  2018}{Woodstock, NY}
%
%  Uncomment \acmBooktitle if th title of the proceedings is different
%  from ``Proceedings of ...''!
%
%\acmBooktitle{Woodstock '18: ACM Symposium on Neural Gaze Detection,
%  June 03--05, 2018, Woodstock, NY} 
\acmPrice{15.00}
\acmISBN{978-1-4503-XXXX-X/18/06}

%%
%% Submission ID.
%% Use this when submitting an article to a sponsored event. You'll
%% receive a unique submission ID from the organizers
%% of the event, and this ID should be used as the parameter to this command.
%%\acmSubmissionID{123-A56-BU3}

%%
%% For managing citations, it is recommended to use bibliography
%% files in BibTeX format.
%%
%% You can then either use BibTeX with the ACM-Reference-Format style,
%% or BibLaTeX with the acmnumeric or acmauthoryear sytles, that include
%% support for advanced citation of software artefact from the
%% biblatex-software package, also separately available on CTAN.
%%
%% Look at the sample-*-biblatex.tex files for templates showcasing
%% the biblatex styles.
%%

%%
%% The majority of ACM publications use numbered citations and
%% references.  The command \citestyle{authoryear} switches to the
%% "author year" style.
%%
%% If you are preparing content for an event
%% sponsored by ACM SIGGRAPH, you must use the "author year" style of
%% citations and references.
%% Uncommenting
%% the next command will enable that style.
%%\citestyle{acmauthoryear}

% ====================== add
\usepackage{graphicx}
\usepackage{xcolor}    
\usepackage{amsmath,array}
\usepackage{amsthm}
\usepackage{colortbl}
\usepackage{threeparttable}
\usepackage{algorithm}
\usepackage{multirow}
\usepackage{amsfonts}
\usepackage{caption}
\usepackage{bbold}
\usepackage{color}
\usepackage{mathtools}
\usepackage{makecell}
\usepackage{booktabs}
\usepackage{enumitem}
\usepackage{makecell}

\usepackage[noend]{algpseudocode}
\usepackage{algorithmicx}

\usepackage{amsfonts,amssymb}
% To Do List

\newtheorem{remark}{Remark}

\newtheorem{thm}{Theorem}
\newenvironment{thmbis}[1]
 {%
 \addtocounter{thm}{-1}%
 \begin{thm}}
 {\end{thm}}

%%
%% end of the preamble, start of the body of the document source.
\begin{document}

%%
%% The "title" command has an optional parameter,
%% allowing the author to define a "short title" to be used in page headers.
\title[DREditor: An Time-efficient Approach for Building Domain-specific DR Model]{DREditor: An Time-efficient Approach for Building a Domain-specific Dense Retrieval Model}

%%
%% The "author" command and its associated commands are used to define
%% the authors and their affiliations.
%% Of note is the shared affiliation of the first two authors, and the
%% "authornote" and "authornotemark" commands
%% used to denote shared contribution to the research.
\author{
    Chen Huang$\dag$, Duanyu Feng$\dag$, Wenqiang Lei, Jiancheng Lv
}

\email{huangc.scu@gmail.com, fengduanyu@stu.scu.edu.cn}
\affiliation{
    %Afiliations
    \institution{College of Computer Science, Sichuan University\\Engineering Research Center of Machine Learning and Industry Intelligence, Ministry of Education}
    \city{Chengdu}
    \country{China}
}

%%
%% By default, the full list of authors will be used in the page
%% headers. Often, this list is too long, and will overlap
%% other information printed in the page headers. This command allows
%% the author to define a more concise list
%% of authors' names for this purpose.
\renewcommand{\shortauthors}{Trovato and Tobin, et al.}

%%
%% The abstract is a short summary of the work to be presented in the
%% article.
\begin{abstract}
Deploying dense retrieval models \textit{efficiently} is becoming increasingly important across various industries. This is especially true for enterprise search services, where customizing search engines to meet the time demands of different enterprises in different domains is crucial.
Motivated by this, we develop a time-efficient approach called DREditor to edit the matching rule of an off-the-shelf dense retrieval model to suit a specific domain. 
This is achieved by directly calibrating the output embeddings of the model using an efficient and effective linear mapping. This mapping is powered by an edit operator that is obtained by solving a specially constructed least squares problem.
Compared to implicit rule modification via long-time finetuning, our experimental results show that DREditor provides significant advantages on different domain-specific datasets, dataset sources, retrieval models, and computing devices. It consistently enhances time efficiency by 100–300 times while maintaining comparable or even superior retrieval performance.
In a broader context, we take the first step to introduce a novel embedding calibration approach for the retrieval task, filling the technical blank in the current field of embedding calibration. This approach also paves the way for building domain-specific dense retrieval models efficiently and inexpensively. Codes are available at \url{https://github.com/huangzichun/DREditor}.
\end{abstract}

%%
%% The code below is generated by the tool at http://dl.acm.org/ccs.cfm.
%% Please copy and paste the code instead of the example below.
%%
\begin{CCSXML}
<ccs2012>
   <concept>
       <concept_id>10010405.10010497.10010498</concept_id>
       <concept_desc>Applied computing~Document searching</concept_desc>
       <concept_significance>500</concept_significance>
       </concept>
   <concept>
       <concept_id>10010405.10010406.10011731.10010411</concept_id>
       <concept_desc>Applied computing~Enterprise application integration</concept_desc>
       <concept_significance>100</concept_significance>
       </concept>
 </ccs2012>
\end{CCSXML}

\ccsdesc[500]{Applied computing~Document searching}
\ccsdesc[100]{Applied computing~Enterprise application integration}

%%
%% Keywords. The author(s) should pick words that accurately describe
%% the work being presented. Separate the keywords with commas.
\keywords{Domain-specific DR, Time-efficient, Embedding Calibration}

%% A "teaser" image appears between the author and affiliation
%% information and the body of the document, and typically spans the
%% page.
% \begin{teaserfigure}
%   \includegraphics[width=\textwidth]{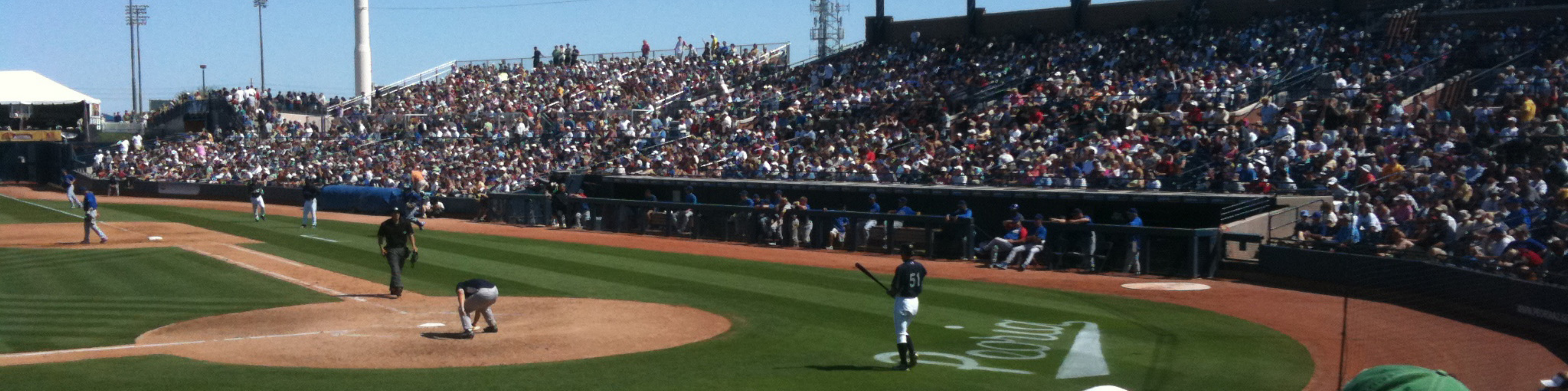}
%   \caption{Seattle Mariners at Spring Training, 2010.}
%   \Description{Enjoying the baseball game from the third-base
%   seats. Ichiro Suzuki preparing to bat.}
%   \label{fig:teaser}
% \end{teaserfigure}

\received{20 February 2007}
\received[revised]{12 March 2009}
\received[accepted]{5 June 2009}

%%
%% This command processes the author and affiliation and title
%% information and builds the first part of the formatted document.
\maketitle
\def\thefootnote{$\dag$}\footnotetext{Both authors contributed equally to this study.}\def\thefootnote{\arabic{footnote}}

\textbf{Relevance to The Web Conference}. 
This work investigates the challenge of time-efficiently building a domain-specific dense retrieval model. As a result, it meets the requirements of \textit{Search} Track and is closely related to the topic of "\textit{Web search models and ranking}" and "\textit{Web question answering}".

\section{Introduction}
% draft v2
%In order to stay competitive in today's business world, small and medium-sized enterprises (SMEs) need to upgrade their technologies as AI continues to rapidly develop \cite{bhalerao2022study}. However, as NLP models continue to increase in size, the amount of training time required to customize these models to fit a company's specific needs can be a significant challenge for these businesses, particularly when they have limited computational resources \cite{chen2021bird, fan2019scheduling}. \textbf{We focus on these time-efficient model customization scenarios, with particular focus on the domain knowledge retrieval}, which is the foundation of Cloud Customer Service (CCS). 
%Businesses from various industries can upload the domain-aware documents (i.e., operating manual) to the CCS platform and aim to quickly configure an AI service to engage intelligently and proactively with customers by answering their questions, thereby resolving user demands in a timely manner.

The efficient deployment of dense retrieval models is becoming increasingly crucial for a wide range of industries. This need is particularly prominent in research and industry areas such as enterprise search (ES), where a single search service is commonly used to support multiple enterprises operating in diverse domains \cite{rose2012cloudsearch, tran2019domain, bendersky2022search}.
%In the world of enterprise search (ES), a single search service is often utilized to power multiple enterprises across various domains \cite{rose2012cloudsearch, tran2019domain, bendersky2022search}. 
One example of ES is the Cloud Customer Service\footnote{Such as Amazon: \url{https://aws.amazon.com/} and Google Cloud: \url{https://cloud.google.com/}} (CCS), where enterprises upload business documents (i.e., operating manual of a product) to the CCS platform to obtain a personalized search engine that can assist customers by answering their questions and addressing their needs. 
Due to the scalability requirements of various enterprises, the success of ES providers hinges on their ability to deliver \textit{time-efficient} searching customization. 
If they fail to do so, it could lead to delays in resolving enterprise needs and ultimately result in a poor experience and loss of business.

The impressive text representation capabilities of pre-trained language models have enabled the development of search engines using dense retrieval (DR) models \cite{tran2019domain}, which establish a \textit{matching rule} that ensures the embedding of a question corresponds to the embedding of its relevant answer\footnote{We use the terms "query" and "question" interchangeably, as well as the terms "corpus" and "answer".} \cite{karpukhin-etal-2020-dense, xiongapproximate}. 
Despite its superiority, a single DR model cannot be applied to multiple enterprises as it does not generalize well across domains. 
This has led to studies on customizing domain-specific DR models \cite{thakur2021beir, yu-etal-2022-coco} to meet the requirements of ES. More importantly, it is crucial for the customization process to be \textit{time-efficient}, as any delays could negatively impact the platform usage experience for all enterprises.
The need for efficient search customization is becoming more pressing as DR models increase in size \cite{bhalerao2022study}. This poses a challenge for small and medium-sized ES providers with limited computational resources \cite{chen2021bird, fan2019scheduling}, as customizing a large DR model to fit an enterprise's domain-specific needs is becoming more time-consuming. This could make these providers less competitive in today's business world. Therefore, \textbf{we aim to study the challenge of efficiently editing the matching rule of a DR model so that it can adapt to a specific domain.}

\begin{figure}[t]
    \centering
    \includegraphics[width=0.45\textwidth]{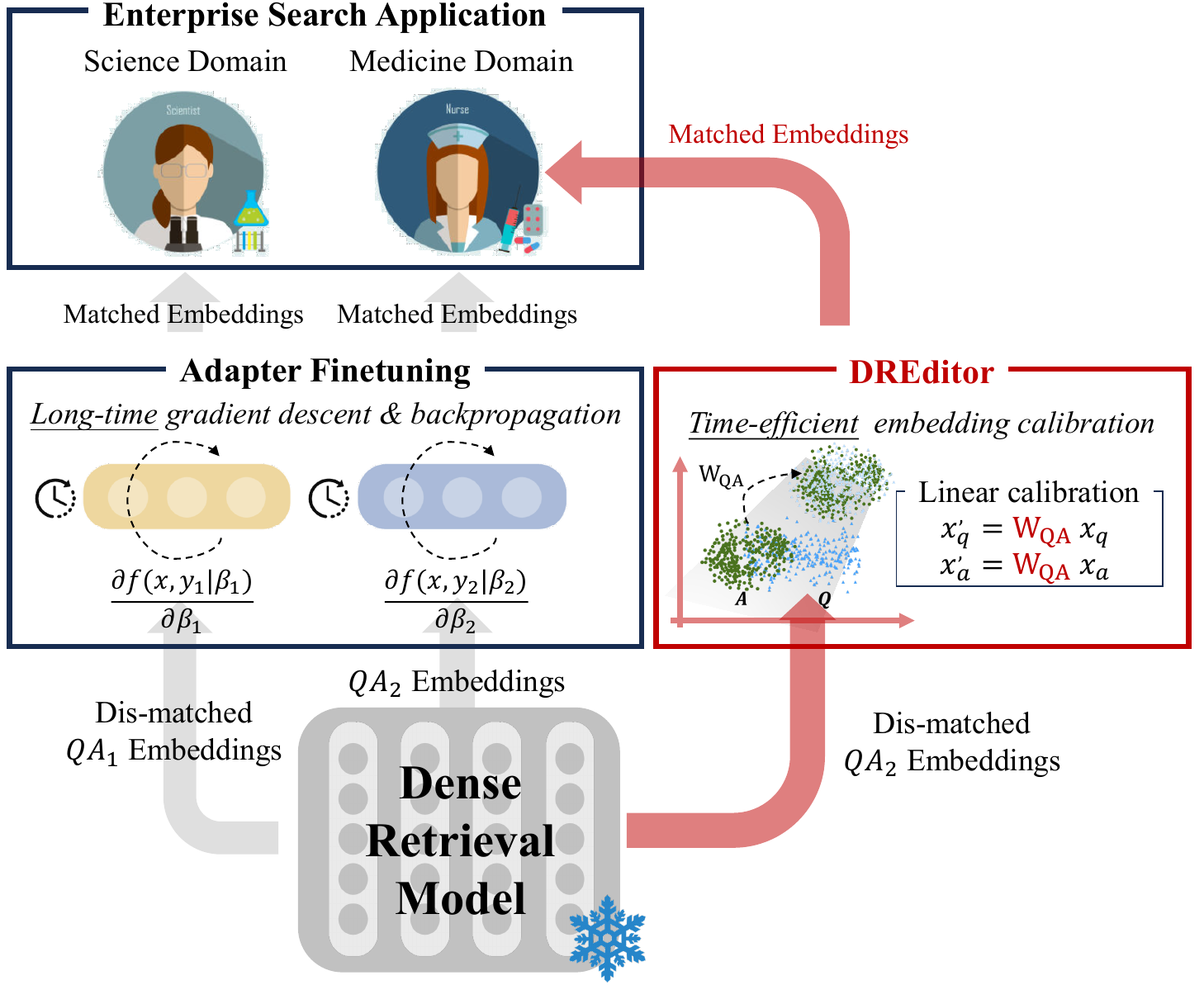}
    \vspace{-1mm}
    \caption{\textcolor{gray}{Fine-tuning} a DR model for each enterprise is time-consuming, while \textcolor{red}{DREditor} is time-efficient due to efficient embedding calibration without long-time parameter tuning.
    }
    \label{fig:my_labdel3333}
    \vspace{-3mm}
\end{figure}

One possible solution, as shown in Figure \ref{fig:my_labdel3333}(Left), is to use \textit{adapter fine-tuning}, which involves tuning only a small part of the original parameters of the model and storing the tuned parameters for each enterprise. 
However, this approach requires long-time iterative gradient descent and backpropagation optimization, which can take at least a day to finish even using high-priced GPUs \cite{pal2023parameter, tam2022parameter, ruckle2020multicqa}. Not to mention the time required to optimize an over-parameterized large model \cite{arabzadeh2021predicting, ren2022thorough}.
Such time consumption is intolerable for time-starved enterprise search services, especially for small and medium-sized ES service providers with limited computational resources.
As a result, adapter fine-tuning may not be the best approach for real-world ES applications due to its long-time tuning process \cite{nityasya2020costs, arabzadeh2021predicting}. 
%\textbf{How to efficiently edit the matching rule of a DR model so that it can adapt to a specific domain without taking up too much time is an urgent problem}.

% 

To this end, we are pioneering the development of a time-efficient method, called DREditor, to edit the matching rule of an off-the-shelf DR model in order to adapt it to a specific domain. Instead of using long-time parameter optimization to edit the rule, DREditor directly calibrates the output embeddings of the model using an efficient linear mapping technique. This is applied as an efficient and post-processing step to ensure that the question embeddings are accurately aligned with their corresponding corpus embeddings.
More specifically, the linear mapping is customized and stored separately for each enterprise, and it is parameterized by an edit operator $W_{QA}$. To maintain time efficiency, we calculate $W_{QA}$ in a closed form by solving a specially constructed least squares problem. Theoretically, solving such a closed form is significantly faster than the gradient descent method, which requires multiple steps to explore the parameter space, and it helps specify the domain-specific semantic associations that are not accounted for in the current matching rule of the DR model. As a result, DREditor boosts the generalization capacity of a DR model on a specific domain without the need for long-time iterative gradient descent and backpropagation methods.
In a broader context, we take the first step to introduce a novel embedding calibration technique for the retrieval task, filling the technical blank in the current field of embedding calibration \cite{faruqui2015retrofitting, mrkvsic2017semantic, glavavs2018explicit, zervakis2021refining}. This also paves the way for building domain-specific dense retrieval models efficiently and inexpensively.
% Such advantages even hold when dealing with zero-shot domain-specific DR scenario (ZeroDR), where the DR model is tuned on the source domain \cite{thakur2021beir, neelakantan2022text, izacard2022unsupervised}.

% Our method is related to recent studies on embedding calibration, but there are currently no calibration methods specifically designed for dense retrieval. Therefore, we do not compare DREditor to existing calibration methods.
% Instead, w
Experimentally, we first compare the DREditor with the adapter finetuning to demonstrate our time efficiency. In this scenario, both methods can use the target-domain data (train data) to edit or tune the DR model. To verify the versatility of DREditor, we also test DREditor on the zero-shot domain-specific DR scenario (ZeroDR)\footnote{ZeroDR is an emerging research field that can be utilized to deal with situations where the ES providers are unable to collect data specific to a particular enterprise to customize their search services.}, 
where the DR model is tuned on the source-domain data (some external data) to ensure its effectiveness \cite{thakur2021beir, neelakantan2022text, izacard2022unsupervised}. 
%which is an emergency situation in the search area. In this scenario, the DR model was tuned on the source domain to ensure its effectiveness \cite{thakur2021beir, neelakantan2022text, izacard2022unsupervised}. 
DREditor has significant advantages on different domain-specific datasets (i.e., finance, science, and bio-medicine), DR models, and computing devices (GPU and CPU).
In particular, the results indicate that DREditor successfully adapts an off-the-shelf DR model to a specific domain and consistently improves retrieval effectiveness compared to the vanilla DR model.
Furthermore, DREditor significantly enhances time efficiency by 100–300 times compared to the long-time adapter finetuning while maintaining comparable or even superior retrieval performance when dealing with ZeroDR.
To provide a comprehensive understanding of DREditor, we conducted intrinsic experiments to visualize the calibrated embedding distribution.
The results show that DREditor translates the output embeddings of a DR model to a specialized space with a decreased mean discrepancy between the embeddings of the questions and answers. 
Even when dealing with ZeroDR, utilizing source-domain data to efficiently calibrate DR embedding via DREditor is still a promising solution to alleviate the embedding distribution shift between the source-domain and target-domain data.
In this paper, we recognize our work as an important proof-of-concept for building a domain-specific DR model time-efficiently and inexpensively. To sum up, we claim the following contributions.

% Such advantages even hold when dealing with zero-shot domain-specific DR scenario (ZeroDR), where the DR model is tuned on the source domain \cite{thakur2021beir, neelakantan2022text, izacard2022unsupervised}. 
% In particular, the results show that DREditor successfully adapts an off-the-shelf DR model to a specific domain and gives a consistent improvement in retrieval effectiveness over the raw DR model. Compared to conventional adapter finetuned models that involve gradient propagation on GPUs, our method gives substantial improvements in time efficiency, 100-300 times faster without significantly compromising retrieval performance.

\begin{itemize}
    \item We call attention to the challenge of efficiently deploying dense retrieval (DR) models for various industries' applications, including enterprise search (ES), to meet their search customization requirements.
    % We call attention to the challenge of efficiently building domain-specific DR models to meet the requirements of ES providers on the searching customization.
    \item For the first time, we propose a time-efficient approach called DREditor to edit the matching rule of a DR model to suit a specific domain by embedding calibration, which efficiently specifies the domain-specific semantic associations that are not accounted for in the given DR model via two matrix multiplications, ensuring our efficiency. 
    \item In DREditor, we propose a novel embedding calibration method for DR models. We lay its mathematical foundation by solving a specially constructed least squares problem in a closed form. 
    % For the first time, we propose a time-efficient approach based on mathematical derivation to directly edit the matching rule of a DR model in a post-hoc manner by embedding calibration. This is achieved by solving a specially constructed least squares problem in a closed form without long-time gradient descent and backpropagation.
    \item We verify our efficiency and effectiveness with empirical studies and give a successful implementation of time-efficiently building domain-specific DR. We set a landmark for future studies of efficient editing DR model.
\end{itemize}

\section{Related Work}
Our focus is on efficiently editing a DR model to adapt to a specific domain. We conducted a literature review on domain-specific dense retrieval. Since our method follows the research on embedding calibration and model editing, we also discuss our differences.
%\footnote{Due to limited space, we also discuss our difference to the embedding retrofitting in Appendix \ref{seeherebaby}.}. %修改

% 现在对了吗
\begin{figure*}[t]
    \centering
    \includegraphics[width=0.9\textwidth]{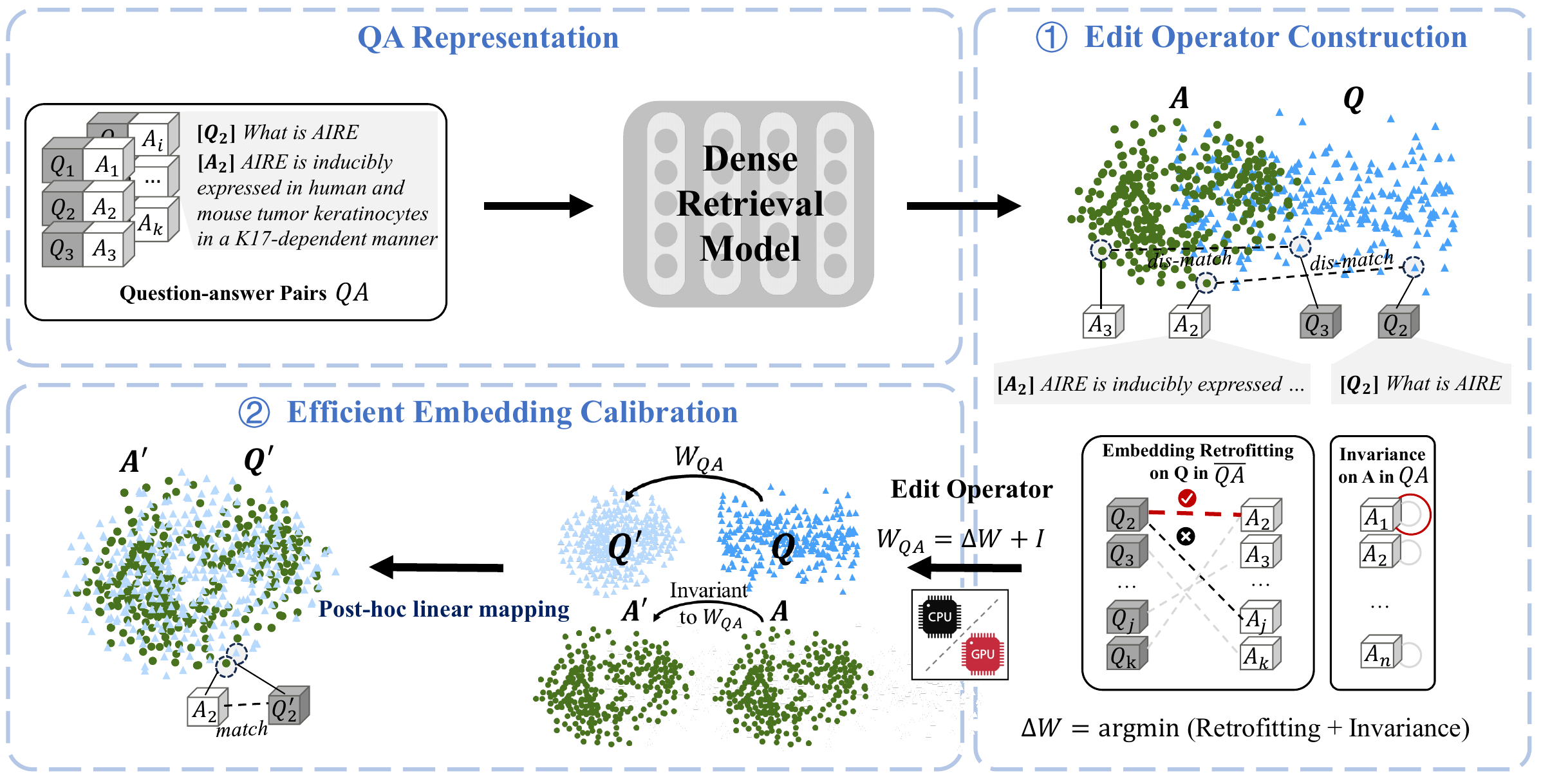}
    \vspace{-2mm}
    \caption{Pipline of DREditor, which aims to find a linear mapping, powered by $W_{QA}$, to correct the mismatched pairs. Here, $W_{QA}$ is derived by solving a specifically constructed least squares problem.}
    \label{fig:my_labdel}
    \vspace{-3mm}
\end{figure*}

\textbf{Domain-Specific DR and ZeroDR.} 
Pre-training over-paramet-erized PLMs to obtain a better domain generalization on retrieval tasks is known to be time-consuming \cite{10.1145/3397271.3401075, santhanam2022colbertv2}.
The canonical approach to adapting a DR to a specific domain is to collect training data and use it to train the DR model, such as by finetuning SBERT \cite{reimers-2019-sentence-bert}, DPR \cite{karpukhin-etal-2020-dense}, and ANCE \cite{xiongapproximate}. 
When it's not possible to collect in-domain training data, zero-shot domain-specific DR (ZeroDR) \cite{thakur2021beir, neelakantan2022text, izacard2022unsupervised} has emerged as a promising solution, which also needs to finetune source-domain data like the background knowledge base \cite{zhu2022enhancing, seonwoo2022virtual, rocktaschel2015injecting}.
However, existing methods necessitate long-time iterative gradient backpropagation \cite{arabzadeh2021predicting, ren2022thorough} for finetune, which can take at least a day to finish even using high-priced GPUs \cite{pal2023parameter, tam2022parameter, ruckle2020multicqa}. 
Not to mention the time required to optimize an over-parameterized large model \cite{arabzadeh2021predicting, ren2022thorough}.
To improve the training efficiency, parameter-efficient finetuning methods like LoRA \cite{hu2021lora} are proposed. However, recent evidence suggests that these methods may need more training time \cite{guo2020parameter, houlsby2019parameter, fu2023effectiveness} as they prioritize reducing the number of parameters and improving device usage. 
As a result, long model training times limit the use of domain-specific retrieval models in real-world enterprise search applications \cite{nityasya2020costs, arabzadeh2021predicting}.
This motivates us to push the limit of building domain-specific DR efficiently to meet the requirements of time-demanding ES services.

%%%knowledge editor 
\textbf{Embedding Calibration.} 
% Embedding calibration is an emerging area in the communities of NLP \cite{lengerich2018retrofitting, dong-etal-2022-calibrating} and image processing \cite{chen2022uncertainty, fisch2022calibrated, santurkar2021editing, bau2020rewriting}. For NLP, it aims at incorporating semantic knowledge into word embeddings and translating them into the specialized space that meets the semantic constraints \cite{glavavs2018explicit, shah2020retrofitting}, such as synonyms-antonyms \cite{mrksic-etal-2016-counter}, hypernyms-hyponyms \cite{faruqui2015retrofitting}, and affective lexicons constraints \cite{shah2022affective, seyeditabari2019emotional}. 
Embedding calibration is an emerging area \cite{lengerich2018retrofitting, dong-etal-2022-calibrating} that incorporates semantic knowledge into word embeddings and translates them into specialized spaces that meet semantic constraints \cite{glavavs2018explicit, shah2020retrofitting}.
These constraints include synonyms-antonyms \cite{mrksic-etal-2016-counter}, hypernyms-hyponyms \cite{faruqui2015retrofitting}, and affective lexicons constraints \cite{shah2022affective, seyeditabari2019emotional}.
While existing methods have been validated for their effectiveness in multiple downstream tasks, they are limited to classification \cite{lengerich2018retrofitting} or generation \cite{wald2021calibration} tasks. Calibrating the embeddings of a DR model for domain-specific retrieval tasks remains unexplored. 
Therefore, before our DREditor, there was a lack of quick calibration methods specifically designed for dense retrieval that could be compared or referenced.
For the first time, we propose to fill this blank and demonstrate our effectiveness experimentally and theoretically. Moreover, unlike most existing methods that focus on non-contextual embeddings and only one specified data format, our DREditor is designed to calibrate the contextual embeddings and has more data-format scalability. For more information on how DREditor works, please refer to Section \ref{7yuqw34ukryqeawib}.

\textbf{Model Editing.} 
Model editing is an emerging field that seeks to facilitate rapid modifications to the behavior of a model \cite{mitchell2022memory}. 
The most commonly chosen methods modify the model behavior by involving the external domain information that is absent from the model.
Usually, these domain information takes the form of codebooks \cite{hartvigsen2022aging, chenglanguage} and key-value memories \cite{yao2022kformer, chenglanguage, dong-etal-2022-calibrating, cheng2023editing}. 
However, not only can existing methods not be used for DR tasks, but also those methods still require the long-time training process, leaving efficient model modification a goal striving for all along.
Calling attention to the efficient model editing, \citep{santurkar2021editing, bau2020rewriting} propose making the rank-one updates to the model parameter without iterative optimization. However, they could only edit an image classifier \cite{santurkar2021editing} and a generative image model \cite{bau2020rewriting}.
Unlike these methods, we take the first steps to propose a method to edit a DR model for text matching tasks. Additionally, our method is time-efficient without iterative gradient descent and backpropagation.

\section{DREditor for efficiently Editing DR}
To distinguish our approach from adapter-based parameter tuning, we first present the task formalization and explain how we edit the matching rule of a DR model using embedding calibration in Section  \ref{tawefn2093r}. This involves using an edit operator called $W_{QA}$ to perform a linear mapping. We then demonstrate how to efficiently construct this edit operator in Section \ref{7yuqw34ukryqeawib}. Finally, we elaborate on the efficient rule calibration in a post-hoc manner and even use it in the zero-shot DR scenario in Section \ref{thisiswhatisit}.

% In the following, we first show the task formalization of DREditor which involves a edit operator $W_{QA}$ to perform the linear embeddings mapping, and then show how to construct such the edit operator $W_{QA}$. We finally move on to showing how to use $W_{QA}$ for efficient rule calibration in a post-hoc manner and even use it in the zero-shot DR scenario.

\subsection{Task Formalization}
\label{tawefn2093r}
The traditional approach to modifying the matching rule of a DR model involves gathering domain-specific data and finetuning the model via gradient descent and backpropagation, which is time-consuming. DREditor, on the other hand, directly edits the matching rule by calibrating the output embeddings of the DR. 
Formally, as shown in Figure \ref{fig:my_labdel}, 
DREditor uses a dataset of question-answer pairs $QA$, where each pair $(q, a) \in QA$ consists of a domain-related question $q$ and its corresponding answer $a$. Additionally, it uses a dataset of mismatched pairs $\overline{QA} \subset QA$, where the semantic association between the question and answer in the embedding space is low\footnote{Measured by whether the DR retrieves the corresponding answer from the candidates based on the input question.}.
\textbf{DREditor aims to find a linear mapping, powered by $W_{QA}$, to correct the mismatched pairs so that $W_{QA}x_q=x_a$}, implying that domain-specific question embedding $x_q$ is accurately matched to its relevant answer embedding $x_a$.
We refer to $W_{QA}$ as an edit operator because it edits the embedding $x_q$ to modify the associations between the domain-related question and its answer. This feature allows for the post-hoc editing of the matching rule. It's worth noting that the edit operator is designed to be adaptable to DR models of various architectures, not limited to just the two-tower architecture like DPR \cite{karpukhin-etal-2020-dense}.

\subsection{Edit Operator Construction}
\label{7yuqw34ukryqeawib}
DREditor utilizes a linear mapping, powered by the edit operator $W_{QA}$, to directly calibrate the output embeddings of the model. This mapping is designed to be efficient and effective. To obtain the edit operator, we solve a specifically constructed least squares problem. In this subsection, we show how the edit operator is constructed and how it ensures time efficiency.

\textbf{Edit Operator Formalization}. Given mismatched QA pairs $\overline{QA}$, we are motivated to develop the edit operation to enhance the semantic association between questions and corresponding answers in $\overline{QA}$. Intuitively, one could minimize the following squared error of question-answer associations. 
\begin{equation}
    W_{QA} = \arg \min_W \sum_{(x_q, x_a) \in \overline{QA}} \| Wx_q - x_a \|_2^2.
    \label{eq1}
\end{equation}
Although efficiently solving the objective mentioned above is feasible, modifying $x_q$ without affecting $x_a$ poses a challenge when both embeddings come from the same DR encoder. To address this issue, we need to retrofit $x_q$ while keeping $x_a$ unchanged.
In essence, we expect $W_{QA}$ to act as an identity matrix $I$ for $x_a$, while maximizing the association between $x_q$ and $x_a$. Mathematically, this can be represented as $x_a = W_{QA}x_a$ and $x_a = W_{QA}x_q$.
To facilitate optimization, we assume that $W_{QA} = I + \Delta W$, where $\Delta W$ captures the associations in the question-answer pairs $\overline{QA}$ while preserving the representation of $x_a$ as much as possible. Expanding on this idea, in addition to the retrofitting term in Equation (\ref{eq1}), we explicitly include an invariance constraint on $x_a$. This leads to the following weighted least square optimization problem.
%\begin{small} 
\begin{equation}
\begin{split}
    \Delta W = & \arg \min_{\overline{W}} \sum_{(x_q, x_a) \in \overline{QA}} \| (I + \overline{W})x_q - x_a \|_2^2 \\
    & +  \beta \sum_{x_a \in QA} \|  (I + \overline{W})x_a - x_a \|_2^2.
\end{split}
\label{eq3}
\end{equation}
%\end{small}
%% @colfeng, why QA, not \overline{QA}.
Here, the first term in Eq (\ref{eq3}) edits the association between questions and answers,
while the second term preserves the representation of $x_a\in QA$ without any changes. The $\beta>0$ is a hyper-parameter that allows us to control the importance of maintaining the original answer embedding $x_a$ when performing the editing operation.

\textbf{Solving Edit Operator Efficiently}. 
To compute $\Delta W$, the first term in Equation (\ref{eq3}) necessitates the presence of $\overline{QA}$. This means that we have to go through all the QA pairs (by performing forward computation) to identify all the mismatched QA pairs. This process can be quite time-consuming, which can reduce the overall efficiency of our approach. To address this issue, we present a theorem that allows us to optimize $\Delta W$ efficiently without involving $\overline{QA}$ in the optimization process.
\begin{theorem}
    The optimization problem (\ref{eq3}) could be rephrased as the following form that involves only $QA$, where $\hat\beta$ is a hyper-parameter associated with $\beta$. When $\beta \gg 1$, $\hat\beta \approx \beta$.
    % can introduce weights to replace $\overline{QA}$ and transform into the following form:
    %\begin{small}
    \begin{equation}
        \begin{split}
    \Delta W = & \arg \min_{\overline{W}} \sum_{(x_q, x_a) \in QA} \| (I + \overline{W})x_q - x_a \|_2^2 \\
    & +  \hat\beta \sum_{x_a \in QA} \|  (I + \overline{W})x_a - x_a \|_2^2.
        \end{split}
    \label{finaleq}
    \end{equation}
    %\end{small}
    \label{theory}
\end{theorem}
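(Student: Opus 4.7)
The plan is to split the full dataset as the disjoint union $QA = \overline{QA} \sqcup M$, where $M := QA \setminus \overline{QA}$ collects the pairs that the DR encoder already matches correctly, and then argue that the extra contribution to (\ref{finaleq}) coming from $M$ can be folded into the invariance regularizer at the cost of a mild rescaling of $\beta$.

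First I would expand the first sum in (\ref{finaleq}) over this partition, obtaining $\sum_{\overline{QA}}\|(I+\overline{W})x_q-x_a\|_2^2 + \sum_{M}\|(I+\overline{W})x_q-x_a\|_2^2$. The term over $\overline{QA}$ already reproduces the retrofitting term of (\ref{eq3}). For the term over $M$, I invoke the defining property of matched pairs: since the DR encoder already retrieves $x_a$ from $x_q$, the two embeddings are close in the output space, which I will use through the pointwise approximation $x_q \approx x_a$ on $M$. Under this approximation each summand becomes $\|(I+\overline{W})x_a-x_a\|_2^2$, so the entire matched contribution acquires the algebraic form of the invariance penalty and can be absorbed into it. Combining with the existing invariance term then gives an effective weight of $\hat\beta + \rho$ on the invariance sum, where $\rho\in[0,1]$ quantifies the (weighted) fraction of $x_a$'s coming from matched pairs. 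Matching this with the weight $\beta$ in (\ref{eq3}) yields the identification $\hat\beta = \beta - \rho$, so $\hat\beta$ and $\beta$ agree up to a bounded offset; in particular $\hat\beta/\beta \to 1$ as $\beta\to\infty$, which is precisely the stated $\hat\beta\approx\beta$ regime when $\beta\gg 1$.

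The main obstacle is justifying $x_q\approx x_a$ on $M$: ``correctly matched'' only guarantees that $x_a$ is the top-ranked candidate for $x_q$, not that the two embeddings coincide. A fully rigorous treatment would introduce $\varepsilon := \max_{(x_q,x_a)\in M}\|x_q-x_a\|_2$, expand each matched summand via $\|u+v\|_2^2 = \|u\|_2^2 + 2\langle u,v\rangle + \|v\|_2^2$ with $u = (I+\overline{W})x_a - x_a$ and $v = (I+\overline{W})(x_q-x_a)$, and propagate the resulting $O(\varepsilon)$ cross term through the normal equations of the weighted least squares; under a standard boundedness assumption on the spectral norm of the invariance Gram matrix, this perturbation vanishes as $\varepsilon\to 0$, yielding equivalence of the minimizers up to a controlled error. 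The remaining ingredients---splitting the sum, algebraic absorption into the invariance term, and matching coefficients---are purely mechanical.
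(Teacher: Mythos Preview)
Your proposal is correct and follows essentially the same route as the paper: partition $QA$ into $\overline{QA}$ and the already-matched set, invoke $x_q\approx x_a$ on the matched pairs to convert their retrofitting term into an invariance term, and then observe that the resulting unit offset in the invariance weight is negligible once $\beta\gg 1$. The paper keeps the two weights explicit ($\hat\beta$ on $\overline{QA}$ and $\hat\beta+1$ on the matched set) rather than collapsing them into your single $\hat\beta+\rho$, and it stops at the heuristic $x_q\approx x_a$ without the $\varepsilon$-perturbation analysis you sketch, but the argument is otherwise identical.
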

\begin{proof}
    The theorem mentioned above can be demonstrated through the application of certain decomposition and transformation techniques. Refer to Appendix \ref{proof} for detailed proof.
\end{proof}

Theorem \ref{theory} provides a way to optimize $\Delta W$ in an efficient manner, without the need to construct the mismatched pairs $\overline{QA}$.
Note that the optimization problem (\ref{finaleq}) is a well-studied least-squares problem \cite{bau2020rewriting}, which can be solved efficiently using closed form.
\begin{theorem}
    $\Delta W= (X_a X_q^T - X_q X_q^T)(\hat\beta^2 X_a {X_a}^T + X_q X_q^T)^{-1}$ is the minimizer of the problem (\ref{finaleq}). 
    \label{7yujmpol}
\end{theorem}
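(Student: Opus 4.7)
The objective in~(\ref{finaleq}) is a strictly convex quadratic in the matrix variable $\overline{W}$, so my plan is to derive its unique global minimizer from the first-order optimality condition and then read off the closed form in Theorem~\ref{7yujmpol}. Let $X_q$ and $X_a$ denote the matrices whose columns are the embeddings $x_q$ and $x_a$ indexed by the pairs in $QA$. Since $(I+\overline{W})x_q - x_a = \overline{W}x_q + (x_q - x_a)$ and $(I+\overline{W})x_a - x_a = \overline{W}x_a$, the two sums of squared norms can be rewritten in Frobenius form as
\begin{equation*}
L(\overline{W}) = \bigl\| \overline{W}X_q + (X_q - X_a) \bigr\|_F^2 + \hat{\beta}\,\bigl\| \overline{W}X_a \bigr\|_F^2 .
\end{equation*}

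First I would expand $L(\overline{W})$ using $\|M\|_F^2 = \operatorname{tr}(MM^T)$ and collect the quadratic, linear, and constant parts in $\overline{W}$. Then I would differentiate, using the standard identities $\nabla_{\overline{W}}\operatorname{tr}(\overline{W}A\overline{W}^T) = \overline{W}(A + A^T)$ and $\nabla_{\overline{W}}\operatorname{tr}(\overline{W}B) = B^T$, and set the derivative to zero. Because $X_q X_q^T$ and $X_a X_a^T$ are symmetric, this produces the matrix normal equation
\begin{equation*}
\overline{W}\bigl(X_q X_q^T + \hat{\beta}\,X_a X_a^T\bigr) \;=\; (X_a - X_q)\,X_q^T ,
\end{equation*}
and right-multiplying by the inverse of the bracketed matrix yields exactly the formula stated in Theorem~\ref{7yujmpol} (up to the cosmetic reparametrization $\hat\beta \mapsto \hat\beta^2$ used in the paper's statement; the argument is identical under either convention). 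Global optimality is then immediate from strict convexity: vectorizing $\overline{W}$, the Hessian of $L$ is $2\bigl(X_q X_q^T + \hat{\beta}\, X_a X_a^T\bigr) \otimes I$, which is positive definite under mild rank conditions on the data, so the stationary point is the unique global minimizer.

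The main obstacle is not the derivation itself, which is routine matrix calculus once everything is in Frobenius form, but rather justifying invertibility of $X_q X_q^T + \hat\beta^2 X_a X_a^T$. In the regime of interest the number of QA pairs greatly exceeds the embedding dimension $d$, so each Gram-like term is already positive definite and the sum is trivially invertible; if one wanted a fully general statement, the natural fix is to replace the inverse by the Moore--Penrose pseudoinverse or to add an infinitesimal ridge $\varepsilon I$, neither of which changes the structure of the closed form. A secondary bookkeeping point is simply to track carefully the sign in $(X_q - X_a)X_q^T$ versus $(X_a - X_q)X_q^T$ when moving terms across the normal equation, which is where the particular order $(X_a X_q^T - X_q X_q^T)$ in the stated solution comes from.
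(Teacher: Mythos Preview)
Your proposal is correct and takes essentially the same approach as the paper: both reduce the problem to the normal equation $\overline{W}\bigl(X_q X_q^T + \hat\beta^{2} X_a X_a^T\bigr) = X_a X_q^T - X_q X_q^T$ and invert. The only presentational difference is that the paper packages the two weighted sums into a single unweighted least-squares via the column concatenation $[X_q \,|\, \hat\beta X_a]$ and then writes down its normal equation $(I+\Delta W)[X_q|\hat\beta X_a][X_q|\hat\beta X_a]^T = [X_a|\hat\beta X_a][X_q|\hat\beta X_a]^T$ directly, whereas you differentiate the Frobenius objective explicitly; expanding either expression yields the same linear system, and your added remarks on convexity and invertibility (which the paper leaves implicit) only strengthen the argument.
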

\begin{proof}
Our proof starts with the following matrix form of the problem (\ref{finaleq}), where we use the notation $[A|B]$ to be the concatenation operator between matrices or vectors. $X_q$ and $X_a$ are matrix form of the embedding of questions $[x^1_q|x^2_q|\cdots|x^n_q]$ and answers $[x^1_a|x^2_a|\cdots|x^n_a]$ from $QA$, respectively.
\begin{equation}
    (I + \Delta W)[X_q | \hat\beta X_a][X_q | \hat\beta X_a]^T 
    = [X_a | \hat\beta X_a][X_q | \hat\beta X_a]^T.
    \label{s3-eq1}
\end{equation}
We then expand the Eq. (\ref{s3-eq1}) to the following equation,
\begin{equation}
    \Delta W (X_q X_q^T + \hat\beta^2 X_a {X_a}^T) = X_a X_q^T - X_q X_q^T.
    \label{s3-eq2}
\end{equation}
Building upon Eq. (\ref{s3-eq2}), the $\Delta W$ can be directly solved by taking the inversion, thereby completing the proof.
%\begin{equation}
%    \Delta W = (X_a X_q^T - X_q X_q^T)(X_a {X_a}^T + X_q X_q^T)^{-1}.
%    \label{s3-eq3}
%\end{equation}
\end{proof}
While the Theorem \ref{7yujmpol} provides a closed-form solution for our edit operator ($W_{QA}=I+\Delta W$), the solution involves multiplying matrices $X_a {X_a}^T$ and $X_q X^T_q$. This can result in high computational overhead when working with a large number of candidate questions and answers. To mitigate this issue, we adopt the approach of using matrix decomposition to simplify the computation \cite{meng2022mass}.

\begin{remark}
    Denoting $n$ as the size of $QA$, $A = \hat{\beta}^2 n \sum_{i=1}^n \frac{1}{n} x_a^i (x_a^i)^T$ and $\lambda = \hat{\beta}^2 n > 0$ as a hyper-parameter, the matrix $\hat{\beta}^2 X_a (X_a)^T$ can be computed using $A$. Similarly, the matrix $X_q X_q^T$ can be computed using $Q = n \sum_{i=1}^n \frac{1}{n} x_q^i (x_q^i)^T$. Consequently, the following $\Delta \hat{W}$ is an efficient approximation to $\Delta W$, where $\lambda = \hat\beta^2 n$ is used to adjust the weight of maintaining the answer embedding $x_a$ constant (cf. Appendix \ref{hyper} for the analysis on $\lambda$).
    \begin{equation}
    \Delta \hat{W} = (X_a X^T_q - Q) (A + Q)^{-1}.
    \label{eq4}
\end{equation}
\end{remark}

% \todo{\textbf{CHEN}: Does this theorem 3.3 require a proof? If yes, add it. If not,  change the \{theorem\} to \{remark\}.}

\begin{remark}
    The derived edit operator $W_{QA}=I+(X_a X^T_q - Q) (A + Q)^{-1}$ retrofits $X_q$ without affecting $X_a$, ensuring that the DREditor can be used to DR models of various architectures.
\end{remark}

% Again, we use the $\lambda = \hat\beta^2 n $ to adjust the weight of holding the embedding of answers $x_a$ constant according to the dataset size $n$. 
% Further analysis on $\lambda$ is presented in Appendix \ref{hyper}. In this case, our edit operator is obtained by $W_{QA}=I+\Delta W$, which modifies the embedding of the question while ensuring time efficiency.
% \todo{because Eq (\ref{finaleq}) is weighted least squares form of Eq (\ref{eq3}), we can adjust the $\lambda$ to make the results of them more closer}
% \todo{$\lambda$ can be used to adjust the weight of the weighted least squares form in Theorem 3.1. This can further adjust the calibration to focus on keeping all the embedding of questions and answers unchanged or change most of them.}
To sum up, the edit operator, which is derived in closed form, can be expressed as $W_{QA}=I+(X_a X^T_q - Q) (A + Q)^{-1}$, and does not require an iterative optimization process. The computational complexity of $\Delta \hat{W}$ is $O(nd^2+d^3)$, where $d$ represents the dimension of an embedding vector. In practice, this complexity can be reduced by using advanced matrix inversion algorithms implemented in libraries such as Numpy and Pytorch. We highlight that the derived $W_{QA}$ does not affect the $x_a$ during the calibration (cf. Section \ref{intrin eva} for experiments). 
This enables our method to be applied to various DR models with different architectures, including single-tower and two-tower architectures.

% Due to the matrix multiplication and inversion, 
% Refer to Appendix \ref{yh23w98y239893} for more experimental discussion.

% and \todo{make the result of Eq (\ref{finaleq}) close to Eq (\ref{eq3})}, we can utilize the method suggested by a previous study \cite{meng2022mass}. This involves using $\mathbb{E}_{x_a\in QA} [x_a x_a^T]$ to estimate $X_a {X_a}^T$, which samples $x_a$ from $QA$.
% Due to the potentially unlimited size of $\{x^i_a\}\in QA$, obtaining $X_a$ from the overall $QA$ may not be feasible. 
% To address this, a previous study \cite{meng2022mass} suggests using $\lambda \mathbb{E}_{x_a\in QA} [x_a x_a^T]$ to estimate the $X_a {X_a}^T$, which involves sampling $x_a$ from $QA$.
% becoming Eq (\ref{eq4}). 
%Finally, the Eq (\ref{s3-eq3}) derives a closed-form solution as presented below.
%\begin{equation}
%    \Delta W = (X_q - X_a) X^T_q ( A +X_q X^T_q )^{-1}.
%    \label{eq4}
%\end{equation}
%Here, $X_q$ and $X_a$ are the matrix form of the embedding of questions from $QA$, respectively. $A = \lambda\mathbb{E}_{x_a \in QA}[x_a x^T_a]$, where $\lambda > 0$ is a hyper-parameter. The $\lambda$ is used to adjust the weight of holding the embedding of answers $x_a$ constant. Further analysis of this hyper-parameter is presented in Appendix \ref{hyper}. In this case, our edit operator is obtained by $W_{QA}=I+\Delta W$, which modifies the embedding of the question while ensuring efficiency in terms of time. 
% \todo{Need complexity analysis?}

\subsection{Efficient Embedding Calibration}
\label{thisiswhatisit}
The post-processing step of DREditor's matching rule editing involves a computation-efficient linear transformation, which is powered by the derived edit operator $W_{QA}$. This transformation adds an inductive bias  $\Delta Wx$  to the output embeddings $x$ of the off-the-shelf DR model, acting as a calibration term. By doing so, both question and answer embeddings are translated into a domain-specialized space, allowing the DR model to generalize to the target domain. DREditor offers a lightweight and portable approach to customize and store multiple edit operators for each enterprise efficiently, making it suitable for various domains. Algorithm \ref{alg:algorithm1} shows the pseudo-code of the algorithm, and it is noted that editing $x_a$ has little effect due to the invariance property. However, editing both $x_q$ and $x_a$ ensures alignment with the optimization objective, which guarantees theoretical effectiveness. For more information on the effectiveness of calibrating on $x_a$, please refer to Appendix \ref{onesizebaby}.

%The matching rule editing of DREditor is realized in a post-processing step. It only requires performing a computation-effective linear transformation, powered by the derived edit operator $W_{QA}$, on the output embeddings of the off-the-shelf DR model. Superficially, when applying $W_{QA}$ to the embedding $x$, it essentially adds an inductive bias $\Delta Wx$ to the embedding of $x$, acting as a calibration term. By this means, the embeddings of both questions and answers are efficiently translated into the domain-specialized space and hence generalizing the DR model to the target domain. In response to the requirements of the ES scenario, DREditor provides a portable, lightweight approach to power multiple enterprises across various domains by customizing and storing multiple edit operators for each enterprise efficiently. Formally, the pseudo-code of our algorithm is shown in Algorithm \ref{alg:algorithm1}. Notably, whether to edit $x_a$ or not will not affect the effect much due to the invariance property. However, we prefer editing both $x_q$ and $x_a$ to ensure alignment with our optimization objective and guarantee its effectiveness theoretically. Refer to Appendix \ref{onesizebaby} for our experimental comparison of our effectiveness with or without calibrating on $x_a$.

\begin{algorithm}[t]
    \caption{{Pseudo-code of DREditor}}
    \label{alg:algorithm1}
\begin{algorithmic}
\State\textbf{Input}: $\lambda > 0$, a domain-specific dataset $QA$, test dataset $QA^t$, a DR model
\State\textbf{Output}: Calibrated embedding
\end{algorithmic}
\begin{algorithmic}[1] %[1] enables line numbers
        \State \%\%\% \textit{Initial embeddings computation.}
        \State Obtain embeddings $(x^i_q, x^i_a) \in QA, \; (i =1,\cdots,n)$ from DR.
        % \State Obtain questions embedding matrix $X_q = [x^1_q|x^2_q|\cdots|x^n_q]$ and answers embedding matrix $X_a = [x^1_a|x^2_a|\cdots|x^n_a]$.
        % \State Estimate $X_aX_a^T$ by $A=\lambda\mathbb{E}_{x_a \in QA}[x_a x^T_a]$.
        \State \%\%\% \textit{Edit Operator Construction.}
        \State Compute $A = \hat{\beta}^2 n \sum_{i=1}^n \frac{1}{n} x_a^i (x_a^i)^T$, 
$Q = n \sum_{i=1}^n \frac{1}{n} x_q^i (x_q^i)^T$
        \State Compute $\Delta \hat{W} = (X_a X^T_q - Q) (A + Q)^{-1}$.
        \State Obtain the edit operator $W_{QA}=I+\Delta \hat{W}$.
        \State \%\%\% \textit{Efficient Embedding Calibration.}
        \State Obtain embeddings $(x^i_q, x^i_a) \in QA^t, \; (i =1,\cdots,p)$ from DR.
        \State Calibrate test embeddings by linear mapping $x'^{i}_q = x^i_q W_{QA}$ and $x'^{i}_a = x^i_a W_{QA} , \; (i =1,\cdots,p)$.
\end{algorithmic}
\end{algorithm}
% \setlength{\textfloatsep}{0.35cm}
% \setlength{\floatsep}{0.35cm}

% \begin{algorithm}[t]
%     \caption{{Pseudo-code of DREditor}}
%     \label{alg:algorithm}
% \begin{algorithmic}
% \State\textbf{Input}: $\lambda > 0$, a dataset $QA$, a DR model
% \State\textbf{Output}: Updated embedding
% \end{algorithmic}
% \begin{algorithmic}[1] %[1] enables line numbers
%         \State \%\%\% \textit{Editor operator construction.}
%         \State Construct $\overline{QA}$ by matching rule examination
%         \State Construct the edit operator $\Delta W$ by Eq (\ref{eq4}).
%         \State $W_{QA}=I+\Delta W$.
%         \State \%\%\% \textit{Efficient rule editing}
%         \State Editing adapter's parameter using $W_{QA}$.
%         \State \%\%\% \textit{DREditor in action}
%         \State Obtain embeddings of question $x_q$ and answer $x_a$ from DR. %each
%         \State Calibrating embeddings by $x_q' = x_q W_{QA}$ and $x_a' = x_a W_{QA}$.
% \end{algorithmic}
% \end{algorithm}

\textbf{Time Analysis.}
\label{timesec}
DREditor is a more efficient alternative to adapter-based finetuning because it directly calibrates the output embeddings through the edit operator instead of using iterative gradient-based backpropagation to modify the matching rule of DR.
The time consumption of our method can be divided into two parts. Firstly, solving the edit operator takes up most of the time in our method because it involves forward computation and matrix inverse operations. However, we experimentally find that our method can complete operations at the second and minute levels. Additionally, this time consumption is much lower than the time required for model fine-tuning.
Secondly, our rule editing step is applied as a post-processing step with two efficient matrix multiplications. This process only involves forward computation, which is equivalent to the testing stage of fine-tuned DR. However, to further reduce editing time, engineering-related tricks such as the cache mechanism can be used to cache the forward computation or DR embeddings. For our experiments, we chose not to use these tricks for fair comparison.

\textbf{Applied to ZeroDR Scenarios}. 
In this section, we demonstrate how DREditor can be used to edit the ZeroDR model for zero-shot domain-specific DR. This expands the application of DREditor to situations where enterprises on the ES platform lack domain-aware documents and struggle to build domain-specific models.
In particular, ZeroDR uses a source domain dataset to train and generalize a DR model to the target domain. Background knowledge-based methods, such as those mentioned in \cite{zhu2022enhancing, seonwoo2022virtual, rocktaschel2015injecting}, utilize datasets that provide an understanding of the specific concepts and explanations associated with the words or facts encountered in the target-domain text as the source-domain dataset. Similarly, we use background knowledge extracted from Wikipedia and ChatGPT as the source-domain dataset.
Since the background knowledge base can take different forms, structured or unstructured, we translate it into a unified representation of question-answer pairs $QA$, which is aligned with the data form of the text retrieval task. In our experiments, we consider both structured knowledge graphs extracted from WikiData and unstructured text facts from ChatGPT. Please refer to Algorithm \ref{alg:algorithm} and Section \ref{lalalalal} for more details.

\begin{algorithm}[t]
    \caption{{Pseudo-code of DREditor in ZeroDR Scenarios}}
    \label{alg:algorithm}
\begin{algorithmic}
\State\textbf{Input}: $\lambda > 0$, a test dataset $QA^t$, a DR model % , some domain key words $\{w_i\}_{i =1}^{k}$, \textbf{}
\State\textbf{Output}: Calibrated embedding
\end{algorithmic}
\begin{algorithmic}[1] %[1] enables line numbers
        \State \%\%\% \textit{Collect source-domain datasets.}
        \State Obtain source-domain pairs $QA$ from WikiData or ChatGPT
        \State \%\%\% \textit{Initial embeddings computation.}
        \State Obtain embeddings $(x^i_q, x^i_a) \in QA ,\; (i =1,\cdots,m)$ from DR.
        %\State Obtain questions embedding matrix $X_q = [x^1_q|x^2_q|\cdots|x^m_q]$ and  answers embedding matrix $X_a = [x^1_a|x^2_a|\cdots|x^m_a]$.
        %\State Estimate $X_aX_a^T$ by $A=\lambda\mathbb{E}_{x_a \in QA}[x_a x^T_a]$.
        \State \%\%\% \textit{Edit Operator Construction.}
        \State Compute $A = \hat{\beta}^2 n \sum_{i=1}^n \frac{1}{n} x_a^i (x_a^i)^T$, 
$Q = n \sum_{i=1}^n \frac{1}{n} x_q^i (x_q^i)^T$
        \State Obtain the edit operator $W_{QA}=I+(X_a X^T_q - Q) (A + Q)^{-1}$.
        \State \%\%\% \textit{Efficient Embedding Calibration.}
        \State Obtain embeddings $(x^i_q, x^i_a) \in QA^t, \; (i =1,\cdots,p)$ from DR.
        \State Calibrate test embeddings by linear mapping $x'^{i}_q = x^i_q W_{QA}$ and $x'^{i}_a = x^i_a W_{QA} , \; (i =1,\cdots,p)$.
\end{algorithmic}
\end{algorithm}

\section{Experiments}
We evaluate the efficiency and characteristics of DREditor across different domains.
To begin with, we conduct extrinsic experiments in Section \ref{asdrfasefcvasef} to demonstrate the calibration efficiency of DREditor and the retrieval performance of the domain-specific DR models that have been edited. These experiments showcase our superiority in various data sources, DR models, and computation resources such as GPUs and CPUs.
In addition, we perform intrinsic experiments to assess the properties of the calibrated embedding. In Section \ref{intrin eva}, we present the visualization of the embedding projections, which helps reveal a deeper understanding of our method. 

\subsection{Setup}
\label{lalalalal}
\textbf{Datasets from Different Domains \& Sources.}\footnote{See Appendix \ref{dataset} for more details. \label{apdC}} We conduct experiments on various domain-specific datasets to simulate the ES scenario, which requires powering multiple domains. To do this, we use BEIR \cite{thakur2021beir}, a benchmark for domain-specific DR. However, some datasets in BEIR are not publicly available or lack training data, so we exclude them from our experiments. Instead, we use FiQA-2018  \cite{10.1145/3184558.3192301} for financial domain evaluation, NFCorpus \cite{boteva2016full} for bio-medical information retrieval, and SciFact \cite{wadden-etal-2020-fact} for science-related fact-checking. To simulate scenarios where target-domain training data is lacking, we collect source-domain datasets for the aforementioned target domains from WikiData and ChatGPT. Refer to Appendix \ref{456789} for more details.

%To simulate the ES scenario that requires to power various domains, we experiment on various domain-specific datasets. To achieve this, we resort to BEIR \cite{thakur2021beir} in the experiments, which is a benchmark for domain-specific DR. Note that some datasets on BEIR are not publicly available or do not have {training data}, we omit those datasets on BEIR in our experiments. Consequently, we utilize FiQA-2018 \cite{10.1145/3184558.3192301} to conduct the financial domain evaluation, which is an opinion-based question-answering task. NFCorpus \cite{boteva2016full}, harvested from NutritionFacts, is used for the bio-medical information retrieval task. SciFact \cite{wadden-etal-2020-fact} is used for science-related fact checking, which aims to verify scientific claims using evidence from the research literature. To simulate the scenario that may lack of target-domain training data, we collect source-domain datasets for the above target-domain datasets from the WikiData and ChatGPT. 
\begin{itemize}[leftmargin=*]
    \item \underline{WikiData-based}. Given a sentence, we extract each entity $h$ and obtain the triplet $(h, r, t)$ and description information of $h$ and $t$ from WikiData. We then mask out head entity $h$ (or tail entity $t$) in a triple with the word "what" and concatenate the triple element together as a sentence $[S]$, which takes the form of "what $r$ $t$" or "$h$ $r$ what". Next, ChatGPT is utilized to paraphrase the sentence as a question. The corresponding answer is the description information of the masked entity.
    \item \underline{ChatGPT-based}. Inspired by \cite{petroni2019language, wang2020language}, we construct an open-domain background knowledge for each task using ChatGPT. In particular, given a sample text $[T]$ from test data (e.g., a query or a corpus), we require ChatGPT to offer factual information and explanations based on a template "\textit{Give me the factual information and explanations about the following statement or question: [T]}". The response, denoted as $[a]$, is further used to generate a question $q$ based on another template: "\textit{Read the following factual information and generate a question: $[a]$}". In this way, a question-answer pair that implies ChatGPT knowledge is obtained.
\end{itemize}

\noindent\textbf{DR and ZeroDR models}\footref{apdC}. Regarding the DR models, we consider the representative DR baselines on BEIR, including SBERT \cite{reimers-2019-sentence-bert}, DPR \cite{karpukhin-etal-2020-dense}, and ANCE \cite{xiongapproximate}. Each baseline is initialized by the public available checkpoint that is pre-trained based on an open-domain corpus, such as MS MARCO extracted from the Web and Natural Questions\cite{kwiatkowski-etal-2019-natural} mined from Google Search. Regarding the ZeroDR models, we involve the recent SOTA ZeroDR model, called COCO-DR \cite{yu-etal-2022-coco}, which pretrains the Condenser \cite{gao-callan-2021-condenser} on the corpus of BEIR and MS MARCO to enhance the domain generalization. 

\noindent\textbf{Baselines}\footref{apdC}. Our method is closely related to the recent studies on \underline{embedding calibration}. Since there is no existing calibration method that works on dense retrieval or text-matching tasks, we cannot make a comparison. Instead, we compare DREditor with the conventional \underline{adapter finetuning}\footnote{For a more comprehensive understanding, we also involve LoRA \cite{hu2021lora} as parameter-efficient finetuning baselines in Appendix \ref{yhnoilk23qweasdoi}.} to demonstrate our time efficiency. We consider two task scenarios: one where there is domain-specific data available, and another where there is no domain-specific data available (zero-shot). In the domain-specific DR experiments (cf. section \ref{yuhjn3eudj}), we fine-tune each DR baseline using the domain-specific training data. To ensure a fair comparison, we add one adapter layer to each DR baseline and only fine-tune this layer. In the domain-specific ZeroDR experiment (cf. section \ref{7uj2w3o8e34rweoh}), we follow the idea of incorporating background knowledge to enhance domain generalization of DR \cite{zhu2022enhancing, seonwoo2022virtual, rocktaschel2015injecting}. We use the background knowledge data to fine-tune the adapter of each DR baseline. Previous studies have shown that this approach can be effective in improving domain generalization in DR.

\noindent\textbf{Computation Resources}. We push the limits of DREditor and evaluate its efficiency when computational resources are limited. We conduct experiments to demonstrate how well DREditor performs on different computation devices, specifically GPUs and CPUs. For all of our experiments, we use an Nvidia A6000 48GB GPU and an Intel (R) Xeon (R) Gold 6348 CPU @ 2.60GHz. The only difference between the experiments is the type of computing device used; all other hardware and software settings are kept constant.

\noindent\textbf{Evaluation Metrics.} We value and assess the time efficiency of the DR customization to the given domain and its retrieval performance on that domain.
To assess time efficiency, we consider the computing time among different computation devices (i.e., GPU and CPU) when performing DR/ZeroDR model finetuning or embedding calibration.
Our method's time calculation includes the entire computing process, including edit operator construction, as described in Algorithm \ref{alg:algorithm1} and Algorithm \ref{alg:algorithm}. In contrast, the time calculation for adapter finetuning includes the time spent on the training and validation process, including gradient descent and backpropagation. As for the retrieval performance evaluation, we follow previous studies \cite{yu-etal-2022-coco, santhanam2022colbertv2} and report the nDCG@10 (cf. Appendix \ref{yhnoilk23qweasdoi} for more results on other performance metrics).

\subsection{Calibration Efficiency on DR and ZeroDR} 
\label{asdrfasefcvasef}
% Addressing the rapid deployment issue of dense retrieval in different domains of CCS, 
This section aims to show our superiority in different domain-specific datasets (i.e., finance, science, and bio-medicine), data sources (i.e., DR and ZeroDR), DR models, and computing devices (GPU and CPU) in terms of time efficiency and retrieval performance. 

% To conduct a comprehensive empirical evaluation, we experiment on three domain-specific tasks, inclusive of financial question answering, bio-medical information retrieval, and science-related fact-checking. Meanwhile, for each task, we examine the domain generalizability under two settings, i.e., general DR and zero-shot DR (ZeroDR). In this paper, general DR means to utilize specific-domain training data to establish a domain-specific DR model (cf. section \ref{yuhjn3eudj}), while ZeroDR does not access to the training data from the target domain \cite{thakur2021beir, neelakantan2022text, izacard2022unsupervised}. Instead, it trains on source-domain data to create a DR model that can generalize well to the target domain (cf. section \ref{7uj2w3o8e34rweoh}). 

\subsubsection{Domain-specific DR evaluation}
\label{yuhjn3eudj}
Table \ref{tab:dr} shows the results of domain-specific DR evaluation in comparison to adapter finetuning. These results verify the potential of our DREditor to efficiently and inexpensively edit the DR without significantly impacting retrieval performance. We also included LoRA \cite{hu2021lora} as a parameter-efficient finetuning baseline, but since it prioritizes effective training by reducing parameters and improving device usage rather than time efficiency, we provide the comparison results in Appendix \ref{yhnoilk23qweasdoi}. Basically, our results are consistent with previous studies \cite{guo2020parameter, houlsby2019parameter, fu2023effectiveness} that show LoRA requires more training time than adapter finetuning, let alone compared to DREditor.

% It achieves a promising retrieval performance that is comparable to the adapter finetuning, while also providing a substantial improvement in time efficiency of nearly 300 times, averaged on different devices and tasks.

\begin{table}[t]
\centering
\caption{Domain-specific DR evaluation. DREditor efficiently and inexpensively adapts the DR to a given domain without significantly impacting retrieval performance.}
\label{tab:dr}
% \vspace{-1mm}
\resizebox{0.49\textwidth}{!}{%
\begin{tabular}{l|r|c|r|r}
\toprule
\multicolumn{1}{l|}{\textbf{Backbone}} & \multicolumn{1}{c|}{\textbf{Data + Method}} & \multicolumn{1}{c|}{\textbf{Perf. (\%)}} & \multicolumn{1}{c|}{\textbf{GPU Time}} & \multicolumn{1}{c}{\textbf{CPU Time}} \\ \hline
\multirow{9}{*}{SBERT} & \multicolumn{1}{l|}{SciFact} & 55.48 & -- & -- \\
 & + Fine-tune & 68.05$_{+12.57}$ & 2min58s & 28min15s \\
 & + DREditor & \textbf{70.18$_{+14.70}$} & \textbf{7.09s} & \textbf{35.32s} \\ \cline{2-5}
 & \multicolumn{1}{l|}{FiQA} & 23.41 & -- & -- \\
 & + Fine-tune & \textbf{29.45$_{+6.04}$} & 33min15s & 3h26min40s \\
 & + DREditor & 26.18$_{+2.77}$ & \textbf{9.21s} & \textbf{6min23s} \\ \cline{2-5}
 & \multicolumn{1}{l|}{NFCorpus} & 27.36 & -- & -- \\
 & + Fine-tune & \textbf{33.12}$_{+5.76}$ & 2h26min22s & 1d22h15min20s \\
 & + DREditor & 30.30$_{+2.94}$ & \textbf{1min16s} & \textbf{1h9min39s} \\ \midrule
\multirow{9}{*}{DPR} & \multicolumn{1}{l|}{SciFact} & 19.70 & -- & -- \\
 & + Fine-tune & \textbf{58.88}$_{+39.18}$ & 10min10s & 1h36min45s \\
 & + DREditor & 54.35$_{+34.65}$ & \textbf{12.19s} & \textbf{36.67s }\\ \cline{2-5}
 & \multicolumn{1}{l|}{FiQA} & 8.62 & -- & -- \\
 & + Fine-tune & \textbf{12.92}$_{+4.3}$ & 3h53min50s & 1d7h42min20s \\
 & + DREditor & 10.10$_{+1.48}$ & \textbf{13.28s} & \textbf{6min24s} \\ \cline{2-5}
 & \multicolumn{1}{l|}{NFCorpus} & 17.06 & -- & -- \\
 & + Fine-tune & 16.77$_{-0.29}$ & 1d5h57min20s & 7d23h9min16s \\
 & + DREditor & \textbf{18.92}$_{+1.86}$ & \textbf{43.16s} & \textbf{17min57s} \\ \midrule
\multirow{9}{*}{ANCE} & \multicolumn{1}{l|}{SciFact} & 51.72 & -- & -- \\
 & + Fine-tune & \textbf{51.93}$_{+0.21}$ & 5min58s & 24min40s \\
 & + DREditor & 51.88$_{+0.16}$ & \textbf{12.53s} & \textbf{1min20s} \\ \cline{2-5}
 & \multicolumn{1}{l|}{FiQA} & 27.09 & -- & -- \\
 & + Fine-tune & 28.22$_{+1.13}$ & 48min37s & 6h23min30s \\
 & + DREditor & \textbf{28.41}$_{+1.32}$ & \textbf{15.66s} & \textbf{13min39s} \\ \cline{2-5}
 & \multicolumn{1}{l|}{NFCorpus} & 22.01 & -- & -- \\
 & + Fine-tune & \textbf{26.37}$_{+4.36}$ & 2h20min21s & 1d22h15min10s \\
 & + DREditor & 23.44$_{+1.43}$ & \textbf{2min13s} & \textbf{2h36min14s}  \\ \bottomrule
\end{tabular}%
}
\vspace{-3mm}
\end{table}

\begin{table*}[!htb]
\centering
\caption{Domain-specific ZeroDR evaluation. Here, 'FT' means finetuning, 'KG' and 'ChatGPT' represent the structured knowledge graph from WikiData and unstructured text facts from ChatGPT, respectively. DREditor is vastly superior to the adapter finetuning in both zero-shot retrieval performance and time efficiency.}
\label{tab:zerord}
% \vspace{-1mm}
\resizebox{0.999\textwidth}{!}{%
\begin{tabular}{l|l|c|c|c|c|c|c|c|c|c}
\toprule
\multicolumn{1}{l|}{\multirow{3}{*}{\textbf{Backbone}}} & \multicolumn{1}{l|}{\multirow{3}{*}{\textbf{Methods}}} & \multicolumn{3}{c|}{\textbf{Perf. (\%)}} & \multicolumn{6}{c}{\textbf{Training / Calibration Time}} \\ \cline{3-11} 
\multicolumn{1}{l|}{} & \multicolumn{1}{l|}{} & \multicolumn{1}{c|}{\multirow{2}{*}{\textbf{SciFact}}} & \multicolumn{1}{c|}{\multirow{2}{*}{\textbf{FiQA}}} & \multicolumn{1}{c|}{\multirow{2}{*}{\textbf{NFCorpus}}} & \multicolumn{2}{c|}{\textbf{SciFact}} & \multicolumn{2}{c|}{\textbf{FiAQ}} & \multicolumn{2}{c}{\textbf{NFCorpus}} \\ \cline{6-11} 
\multicolumn{1}{l|}{} & \multicolumn{1}{l|}{} & \multicolumn{1}{c|}{} & \multicolumn{1}{c|}{} & \multicolumn{1}{c|}{} & \multicolumn{1}{l|}{\textbf{CPU}} & \multicolumn{1}{l|}{\textbf{GPU}} & \multicolumn{1}{l|}{\textbf{CPU}} & \multicolumn{1}{l|}{\textbf{GPU}} & \multicolumn{1}{l|}{\textbf{CPU}} & \textbf{GPU} \\ \midrule
\multirow{4}{*}{SBERT} & Ours$_{KG}$ & \textbf{55.58} & \textbf{23.39} & \textbf{27.47} & \textbf{2min5s} & \textbf{1min19s} & \textbf{2min29s} & \textbf{1min35s} & \textbf{1min24s} & \textbf{47.45s} \\
 & FT$_{KG}$ & 45.02 & 20.26 & 22.12 & 43min20s & 5min37s & 7h6min20s & \multicolumn{1}{c}{25min11s} & \multicolumn{1}{c}{50min10s} & \multicolumn{1}{c}{7min55s} \\ \cline{2-11} 
 & Ours$_{ChatGPT}$ & 56.12 & \textbf{23.74} & \textbf{27.39} & \textbf{6.82s} & \textbf{3.73s} & \textbf{13.75s} & \textbf{7.86s} & \textbf{7.97s} & \textbf{3.95s} \\
 & FT$_{ChatGPT}$ & \textbf{56.14} & 22.47 & 26.56 & 1min50s & 10.65s & 2h50min20s & 16min22s & 24min30s & 2min11s \\ \midrule
\multirow{4}{*}{DPR} & Ours$_{KG}$ & \textbf{24.27} & \textbf{9.65} & \textbf{18.10} & \textbf{3min54s} & \textbf{2min22s} & \textbf{4min38s} & \textbf{2min48s} & \textbf{2min30s} & \textbf{1min26s} \\
 & FT$_{KG}$ & 13.30& 5.27& 13.09& 27h35min38s & 4h31min19s & 28h53min20s & 4h33min11s & 20h8min42s & 7h27min39s \\ \cline{2-11} 
 & Ours$_{ChatGPT}$ & \textbf{23.25} & \textbf{9.45} & \textbf{17.26} & \textbf{12.39S} & \textbf{6.78S} & \textbf{31.41S} & \textbf{14.82S} & \textbf{13.58S} & \textbf{7.29S} \\ 
 & FT$_{ChatGPT}$ & 12.43& 7.34& 14.66 & 37min36s & 11min39s & 1h16min24s  & 8min7s & 35min45s & 4min23s \\  \midrule
\multirow{4}{*}{ANCE} & Ours$_{KG}$ & \textbf{51.85} & \textbf{27.53} & \textbf{22.14} & \textbf{3min59s} & \textbf{2min27s} & \textbf{4min28s} & \textbf{2min49s} & \textbf{2min44s} & \textbf{1min27s} \\
 & FT$_{KG}$ & 26.90 & 13.98 & 13.34 & 1h23min & 9min1s & 7h52min50s & 44min29s & 2h27min50s & 13min9s \\ \cline{2-11} 
 & Ours$_{ChatGPT}$ & \textbf{51.83} & \textbf{27.77} & \textbf{22.05} & \textbf{12.37s} & \textbf{7.01s} & \textbf{34.81s} & \textbf{14.79s} & \textbf{13.77s} & \textbf{7.46s} \\
 & FT$_{ChatGPT}$ & 39.34 & 18.06 & 19.18 & 3min30s & 20.23s & 6h36min12s & 31min22s & 47min50s & 4min2s \\ \midrule
 
\multicolumn{2}{c|}{COCO-DR (\textit{SOTA ZeroDR})} & 66.36 & 25.96 & 17.63 & \multicolumn{6}{c}{*1.5 Days (8 A100 80GB GPUs \& FP16 mixed-precision) \cite{yu-etal-2022-coco}} \\ \bottomrule
\end{tabular}%
}\vspace{-1mm}
\end{table*}

\textbf{DREditor offers substantial enhancements in time efficiency, achieving speeds up to 300 times faster while maintaining retrieval performance without significant compromise}.
Both adapter finetuning and DREditor significantly improve retrieval performance compared to the backbone DR models across all three domain-specific tasks.
Specifically, DREditor achieves an average improvement of 7.43\% on all backbones and datasets, while adapter finetuning achieves an average improvement of 9.19\%.
This suggests that both adapter finetuning and DREditor can modify the matching rules of the backbone DR model and achieve retrieval performance gains on domain-specific data.
Meanwhile, DREditor has a significant impact on time efficiency, especially when compared to the finetune-based approach. The improvements are remarkable, with DREditor providing a computing efficiency enhancement of around 100 times on the GPU and 400 times on the CPU when compared to adapter finetuning. On average, this translates to a nearly 300 times improvement in time efficiency across all tasks and computation devices.
Additionally, DREditor on the CPU shows a large improvement in time efficiency when compared to GPU-based model finetuning. %For example, DREditor only took 17 minutes of CPU time to edit the matching rules of DR on NFCorpus data, while finetuning on the same data required more than 7 days of CPU time (i.e., 600 times improvement) or one hour of GPU time (i.e., 3.5 times improvement).

Our study demonstrates the significant potential of DREditor to edit DR in an inexpensive and time-efficient manner while maintaining retrieval performance. This is especially beneficial for individuals and small ES providers who create domain-specific DR models in low-resource settings, such as CPUs. By reducing the need for expensive computing devices when editing the matching rules of DR, DREditor offers a practical solution for those who require efficient DR editing capabilities but have limited access to high-priced computing resources. Overall, our findings suggest that DREditor is a valuable tool for improving the efficiency and accessibility of domain-specific DR models.

% Our findings highlight the tremendous potential of our DREditor method to inexpensively edit the DR while maintaining retrieval performance. This is especially important for individual users and small institutions that construct domain DR models in low resource scenarios, such as CPUs. DREditor can significantly reduce the need for expensive computing devices when editing the matching rules of DR, without significantly compromising retrieval performance. This provides a practical solution for those who have limited access to high-priced computing resources but still require efficient DR editing capabilities.

\subsubsection{Domain-specific ZeroDR evaluation}
\label{7uj2w3o8e34rweoh}
%We follow the idea of incorporating the knowledge base to enhance domain generalization of DR \cite{zhu2022enhancing, seonwoo2022virtual, rocktaschel2015injecting}. 

We utilize background knowledge data as the source domain data and evaluate the performance of the target-domain test data. We refer to the knowledge data from the structured WikiData knowledge as 'KG' and the unstructured text facts from ChatGPT as 'ChatGPT'. Each knowledge dataset is processed through DREditor to derive a specialized edit operator. Additionally, we use the dataset to fine-tune our backbone DR models for comparison purposes. We also include the recently open-sourced state-of-the-art ZeroDR model, called COCO-DR \cite{yu-etal-2022-coco}, in our experiment. It's important to note that our primary goal is to explore a more time-efficient method for any DR model rather than proposing a ZeroDR model with better retrieval performance.
%We utilize the collected background knowledge data as the source domain data and evaluate the performance of the target-domain test data. In this paper, we denote the knowledge data from the structured knowledge graph as 'KG' and unstructured text facts as 'ChatGPT'. Each knowledge dataset is fed into DREditor to derive the specialized edit operator. Additionally, we utilize the dataset to fine-tune our backbone models for comparison. Recently open-source state-of-the-art ZeroDR model, called COCO-DR \cite{yu-etal-2022-coco}, is also involved in our experiment. It is important to note that our goal is to explore a more time efficient method for any DR model, rather than proposing a ZeroDR model with better retrieval performance.  

\begin{figure*}[t]
    \centering
    \includegraphics[width=0.95\textwidth]{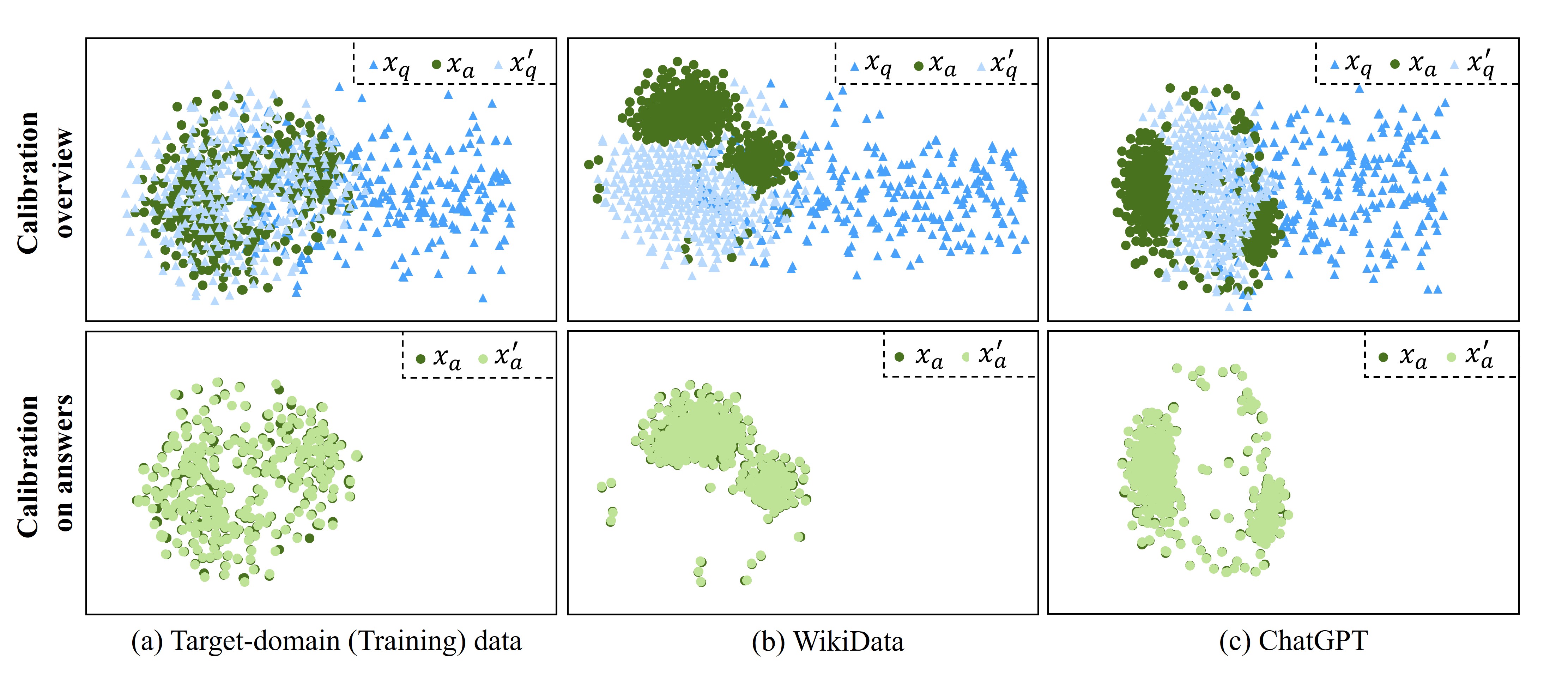}
    % \vspace{-1mm}
    \caption{Embedding illustration on SciFact with SBERT. ${x_q}'$ and ${x_a}'$ as calibrated question and answer embeddings. DREditor enhances the semantic association between the questions and answers while preserving the answer embeddings unchanged.
    }
    \label{fig:scifact-sbert}
    \vspace{-3mm}
\end{figure*}

\textbf{DREditor outperforms adapter finetuning significantly in both zero-shot retrieval performance and time efficiency.}.
As presented in Table \ref{tab:zerord}\footnote{In our experiment, the size of the knowledge data from ChatGPT is smaller than from WikiData. As a result, training or editing on the ChatGPT-based data saves more computational time.}, on average, our method outperforms the adapter finetuning by an impressive 7.19 across all tasks and backbone models. Furthermore, our method boasts an exceptional level of time efficiency, achieving an improvement of almost 100 times across all tasks and computation devices. This means that ES providers can enjoy faster and more accurate retrieval performance without the need for high-priced computing devices. These results demonstrate the effectiveness of our approach and its potential to efficiently edit the DR model and adapt it to the zero-shot scenario. In particular, our DREditor continues to deliver impressive results, achieving an average performance gain of 11.62 on SciFact, 5.69 on FiQA, and 4.24 on NFCorpus. These findings indicate that our approach is transferable across different models. One possible explanation for the inadequacy of finetuning is that the DR backbone may overfit the training data, which has a distribution that is not aligned with the test data in zero-shot scenarios. This distribution mismatch could lead to poor performance in zero-shot retrieval tasks \cite{yu-etal-2022-coco}.
When it comes to CPU computation, DREditor boasts an improvement in time efficiency of around 200 times and 70 times on GPU computation. 
More interestingly, while our approach only requires one A6000 GPU or a CPU and takes a few minutes to edit the DR and adapt it to zero-shot scenarios, COCO-DR demands 8 A100 GPUs and takes almost two days to train. However, such time costs do not necessarily bring constant performance gains on retrieval performance. Even though COCO-DR has already utilized BEIR's corpus (including our used datasets) in its pre-training stage, it still struggles to achieve better results on tasks other than SciFact.

\textbf{ChatGPT has a higher tendency to produce background knowledge of higher quality compared to WikiData in ZeroDR}. 
When we compared the performance of DREditor using different knowledge bases, we found that the two knowledge bases had little effect on retrieval performance, with an average difference of around 0.12. However, the choice of knowledge base had a significant impact on adapter finetuning. Adapter finetuning performed better on ChatGPT data than on KG, with an increase of 4.76. This suggests that the background knowledge from ChatGPT may be of higher quality than that from WikiData.

% Comparing DREditor with different knowledge bases, we discover that the two knowledge base data had little impact on the retrieval performance of DREditor, with an average difference of approximately 0.12. However, it has a significant impact on adapter finetuning. The adapter finetuning performs better on ChatGPT data than on KG (increased by 4.76). This suggests that background knowledge from ChatGPT may be higher-quality than that from WikiData.

\subsection{Visualization of the Embedding Projections}
\label{intrin eva}
% \todo{What is our expected conclusion? 1. invariance to a; 2) illustrate the matching rule is indeed changed. the success of our method is due to the this}
This section aims to perform intrinsic experiments that reveal and confirm the properties of DREditor. Specifically, we use t-SNE \cite{van2008visualizing} to visualize the embedding projections of questions and answers with and without DREditor's calibration. The calibration is done by edit operators using different data sources, and we denote ${x_q}'$ and ${x_a}'$ as the question and answer embedding after calibration, respectively. The results are shown in Figure \ref{fig:scifact-sbert}. For more visualizations, please refer to Appendix \ref{morereulsts}.

% illustrate how the embedding space of question and answer is corrected by our DREditor, using visualization of the embedding projections.
%Since DREditor directly calibrates the embeddings of the DR models, we start our analysis based on the calibrated embedding space. Specifically, we compare the differences of both question and answer embeddings with or without the calibration operation. Here, we denote ${x_q}'$ and ${x_a}'$ as the question and answer embedding after calibration, respectively. We drew the SBERT\footnote{We leave the results of other models in Appendix \ref{morereulsts}. \label{apdmorereulsts}} embeddings after t-SNE in Fig.\ref{fig:scifact-sbert}. 

% \textbf{Main results}. Fig.\ref{fig:scifact-sbert} demonstrates three important observations: 

\textbf{DREditor translates the output embeddings of a DR model to the specialized space with a decreased mean discrepancy between the embeddings of the questions and answers}. 
As per the top row of Figure \ref{fig:scifact-sbert}, the embedding distributions of questions and answers become closer after embedding calibration. This is achieved by only retrofitting the question embeddings while maintaining invariance towards the answer embeddings, as shown in the second row of Figure \ref{fig:scifact-sbert}. Considering the improved retrieval performance in Table \ref{tab:dr} and Table \ref{tab:zerord}, we can safely conclude that the reduced mean discrepancy is due to the improved semantic similarity between the questions and their corresponding answers.

% 1) After DREditor edited the DR model, the calibrated question embedding and answers embedding became closer together. 2) Such the calibration operation hardly affects the embeddings of answers. These two observations empirically reveal that our DREditor, as discussed in section \ref{7yuqw34ukryqeawib}, successfully edits the matching rule and enhances the association between the questions and answers while preserving the representation of corpus unchanged. 

\textbf{Enhancing the correlation between the data used for correction and the target-domain data can help reduce their distribution shift}.
The top row of Figure \ref{fig:scifact-sbert} shows that question embeddings calibrated by different data sources are distributed differently.
In-domain calibrated question embeddings are more uniformly distributed in the space of answer embeddings, whereas question embeddings calibrated by source-domain knowledge (as used in Section \ref{7uj2w3o8e34rweoh}) are denser and more clustered.
Our method migrates the data distribution from the source domain to the target domain as much as possible, which explains why it is effective in zero-shot scenarios. However, calibrating embeddings based on source-domain knowledge cannot achieve distribution alignment with the target domain. Using data with rich in-domain information to calibrate embeddings and edit the matching rules of the DR model is always a better choice. Our experiments verify that utilizing source-domain data to efficiently calibrate DR embeddings is also a promising solution for the zero-shot setting. In any case, we have successfully implemented and verified the efficiency and effectiveness of our approach through empirical experiments. We believe that our work provides new insights for subsequent domain-specific DR research.

\section{Conclusion}
We developed a general toolkit for performing targeted post-hoc modifications to dense retrieval models, with the goal of building a domain-specific dense retrieval model time-efficiently. Crucially, instead of specifying the desired behavior implicitly via long-time finetuning, our method allows users to directly edit the model’s matching rules. By doing so, our approach makes it easier for enterprise search providers to customize a domain-aware search engine to secure the scalability of various enterprises and meet the time-demanding requirements of these enterprises. Additionally, a key benefit of our method is that it fills the blank by incorporating the emerging research studies on embedding calibration into the retrieval task.
Finally, our method also is applicable to zero-shot domain-specific DR scenarios, where we calibrate the embedding using source-domain background knowledge. 
In a broader context, our approach has explored the potential of editing the matching rule of dense retrieval models with embedding calibration. We believe that this opens up new avenues for building domain-specific DR models efficiently and inexpensively.

%\clearpage

%%
%% The next two lines define the bibliography style to be used, and
%% the bibliography file.
\bibliographystyle{ACM-Reference-Format}
\bibliography{ref}

%%
%% If your work has an appendix, this is the place to put it.
\appendix

%Due to the potentially unlimited size of $\{x^i_a\}\in QA$, obtaining $X^{QA}_a$ from the overall $QA$ may not be feasible. 
%To address this, a previous study \cite{meng2022mass} suggests using $\lambda \mathbb{E}_{x_a\in QA} [x_a x_a^T]$ to estimate the $X^{QA}_a {X^{QA}_a}^T$, which involves sampling $x_a$ from $QA$.
%This results in Eq (\ref{ap-eq2}) becoming Eq (\ref{eq4}). The $\lambda$ is used to adjust the weight of holding the embedding of answers $x_a$ constant. Further analysis of hyper-parameters is presented in Appendix \ref{hyper}.
\section{The proof of theorem \ref{theory}}
\label{proof}

\begin{thmbis}{theory}
 The optimization problem (\ref{eq3}) could be rephrased as the following form that involves only $QA$, where $\hat\beta$ is a hyper-parameter associated with $\beta$. When $\beta \gg 1$, $\hat\beta \approx \beta$.
    \begin{equation}
        \begin{split}
    \Delta W = & \arg \min_{\overline{W}} \sum_{(x_q, x_a) \in QA} \| (I + \overline{W})x_q - x_a \|_2^2 \\
    & +  \hat\beta \sum_{x_a \in QA} \|  (I + \overline{W})x_a - x_a \|_2^2.
        \end{split}
    \label{proof1}
    \end{equation}
\end{thmbis}

\begin{proof}
First, we can rewrite the Eq (\ref{proof1}) to the following form:
    \begin{equation}
        \begin{split}
    \Delta W = & \arg \min_{\overline{W}} \sum_{(x_q, x_a) \in \overline{QA}} \| (I + \overline{W})x_q - x_a \|_2^2 \\
     & + \sum_{(x_q, x_a) \in QA/\overline{QA}} \| (I + \overline{W})x_q - x_a \|_2^2\\
    &+ \hat\beta\sum_{x_a \in \overline{QA}} \|  (I + \overline{W})x_a - x_a \|_2^2\\
    &+\hat\beta\sum_{x_a \in QA/\overline{QA}} \|  (I + \overline{W})x_a - x_a \|_2^2.
        \end{split}
    \label{proof2}
    \end{equation}

Here, we separate the $QA$ of the Eq (\ref{proof1}) into $\overline{QA}$ part and $QA/\overline{QA}$ part. The $QA/\overline{QA}$ part means the original correct question-answer pairs, which further implies 
\[when \quad (x_a,x_q) \in QA/\overline{QA},\quad \| x_q - x_a \|_2^2 \quad is \quad small. \]

Similarly, we can rewrite the Eq (\ref{eq3}) to the following form:
    \begin{equation}
        \begin{split}
    \Delta W = & \arg \min_{\overline{W}} \sum_{(x_q, x_a) \in \overline{QA}} \| (I + \overline{W})x_q - x_a \|_2^2 \\
    &+ \beta\sum_{x_a \in \overline{QA}} \|  (I + \overline{W})x_a - x_a \|_2^2\\
    &+\beta\sum_{x_a \in QA/\overline{QA}} \|  (I + \overline{W})x_a - x_a \|_2^2.
        \end{split}
    \label{proof3}
    \end{equation}

Therefore, the difference between Eq (\ref{eq3}) and Eq (\ref{proof1}) is the second term of Eq (\ref{proof2}):
 \[\sum_{(x_q, x_a) \in QA/\overline{QA}} \| (I + \overline{W})x_q - x_a \|_2^2.\]

Because the $QA/\overline{QA}$ is special, we can have this approximation:
    \begin{equation}
        \begin{split}
\sum_{(x_q, x_a) \in QA/\overline{QA}} \| (I + \overline{W})x_q - x_a \|_2^2 \approx  \\
\sum_{(x_q, x_a) \in QA/\overline{QA}} \| (I + \overline{W})x_a - x_a \|_2^2.        
\end{split}
    \label{proof4}
    \end{equation}

Then the Eq (\ref{proof1}) becomes to the following:
    \begin{equation}
        \begin{split}
    \Delta W = & \arg \min_{\overline{W}} \sum_{(x_q, x_a) \in \overline{QA}} \| (I + \overline{W})x_q - x_a \|_2^2 \\
    &+ \hat\beta \sum_{x_a \in \overline{QA}} \|  (I + \overline{W})x_a - x_a \|_2^2\\
    &+ (\hat\beta+1) \times \sum_{x_a \in QA/\overline{QA}} \|  (I + \overline{W})x_a - x_a \|_2^2.
        \end{split}
    \label{proof5}
    \end{equation}

 This is a reweighted form of Eq (\ref{proof3}). When $\beta \gg 1$, $\hat\beta \approx \beta \approx \hat\beta+1$, which makes Eq (\ref{proof3}) almost equal Eq (\ref{proof5}).
\end{proof}

\section{Appendix for Experimental Results}
\subsection{More Results of Calibration Efficiency on DR}
\label{yhnoilk23qweasdoi}
To give a more comprehensive performance evaluation, we report additional metrics (Top-10 Mean Average Precision and Top-10 Recall, i.e., MAP@10 and Recall@10) to evaluate our method. 
In addition, although LoRA \cite{hu2021lora} prioritizes effective training by reducing parameters to optimize and improve the device usage, failing to improve time efficiency \cite{guo2020parameter, houlsby2019parameter, fu2023effectiveness}, we also involve it into the comparison for better understanding. 
We conduct the experiments in the setting of domain-specific DR, and the results are shown in Table \ref{tab:MResultsExtrinsic}. 

% In addition, since LoRA is also a low-parameter modification method at present, although it has a different goal from ours of saving time, we also demonstrate it here.

% Table generated by Excel2LaTeX from sheet 'Sheet12'
\begin{table}[!htb]
  \centering
  \caption{More Results of Calibration Efficiency on DR}
    \resizebox{0.45\textwidth}{!}{
    \begin{tabular}{c|r|c|c|c|r|r}
    \toprule
    \textbf{Backbone} & \multicolumn{1}{c|}{\textbf{Data+Method}} & \textbf{NDCG@10} & \textbf{MAP@10} & \textbf{Recall@10} & \multicolumn{1}{c|}{\textbf{GPU Time}} & \multicolumn{1}{c}{\textbf{CPU Time}} \\
    \midrule
    \multirow{15}[6]{*}{SBERT} & \multicolumn{1}{l|}{SciFact} & 55.48 & 50.40 & 68.92 & --    & \multicolumn{1}{r}{--} \\
          & + Finetune & 68.05 & 63.68 & 79.97 & 2min58s & \multicolumn{1}{r}{28min15s} \\
          & + LORA & 66.59 & 61.53 & \textbf{81.04} & 3min2s & \multicolumn{1}{r}{--} \\
          & + DREditor & \textbf{72.07} & \textbf{70.00} & 77.26 & \textbf{7s} & \multicolumn{1}{r}{\textbf{35s}} \\
\cmidrule{2-7}          & \multicolumn{1}{l|}{FiQA} & 23.41 & 17.77 & 29.02 & --    & \multicolumn{1}{r}{--} \\
          & + Finetune & 29.10 & \textbf{22.31} & 36.26 & 33min15s & \multicolumn{1}{r}{3h26min40s} \\
          & + LORA & \textbf{29.48} & 22.27 & \textbf{37.23} & 28min41s & \multicolumn{1}{r}{--} \\
          & + DREditor & 25.22 & 19.18 & 31.06 & \textbf{9s} & \multicolumn{1}{r}{\textbf{383s}} \\
\cmidrule{2-7}          & \multicolumn{1}{l|}{NFCorpus} & 27.36 & 9.38 & 13.27 & --    & \multicolumn{1}{r}{--} \\
          & + Finetune & 32.60 & 10.94 & 15.90 & 2h26min22s & \multicolumn{1}{r}{1d22h15min20s} \\
          & + LORA & \textbf{37.16} & \textbf{15.00} & \textbf{19.17} & 2h21min10s & \multicolumn{1}{r}{--} \\
          & + DREditor & 30.37 & 10.62 & 14.23 & \textbf{1min16s} & \multicolumn{1}{r}{\textbf{1h9min39s}} \\
    \midrule
    \multirow{12}[6]{*}{DPR} & \multicolumn{1}{l|}{SciFact} & 19.70 & 15.95 & 30.55 & --    & \multicolumn{1}{r}{--} \\
          & + Finetune & \textbf{58.88} & \textbf{54.71} & \textbf{70.68} & 10min10s & \multicolumn{1}{r}{1h36min45s} \\
          & + DREditor & 55.04 & 53.12 & 59.26 & \textbf{12.19s} & \multicolumn{1}{r}{\textbf{36.67s}} \\
\cmidrule{2-7}          & \multicolumn{1}{l|}{FiQA} & 8.62 & 5.92 & 11.69 & --    & \multicolumn{1}{r}{--} \\
          & + Finetune & \textbf{12.92} & \textbf{8.69} & \textbf{18.19} & 3h53min50s & \multicolumn{1}{r}{1d7h42min20S} \\
          & + DREditor & 10.18 & 7.16 & 13.78 & \textbf{13.28s} & \multicolumn{1}{r}{\textbf{6min24s}} \\
\cmidrule{2-7}          & \multicolumn{1}{l|}{NFCorpus} & 17.06 & 4.46 & 7.84 & --    & \multicolumn{1}{r}{--} \\
          & + Finetune & 13.65 & 3.43 & 5.25 & 1d5h57min20s & \multicolumn{1}{r}{7d23h9min16s} \\
          & + DREditor & \textbf{18.96} & 5.38 & \textbf{8.03} & \textbf{43.16s} & \multicolumn{1}{r}{\textbf{17min57s}} \\
    \midrule
    \multirow{15}[6]{*}{ANCE} & \multicolumn{1}{l|}{SciFact} & 51.72 & 46.59 & 65.96 & --    & \multicolumn{1}{r}{--} \\
          & + Finetune & 51.93 & 47.06 & \textbf{66.95} & 5min58s & \multicolumn{1}{r}{24min40s} \\
          & + LORA & \textbf{54.51} & \textbf{50.26} & 66.51 & 5min34s & \multicolumn{1}{r}{--} \\
          & + DREditor & 51.93 & 46.79 & 66.29 & \textbf{12.53S} & \multicolumn{1}{r}{\textbf{1min20s}} \\
\cmidrule{2-7}          & \multicolumn{1}{l|}{FiQA} & 27.09 & 20.81 & 33.23 & --    & \multicolumn{1}{r}{--} \\
          & + Finetune & 27.26 & 20.85 & 32.87 & 48min37s & \multicolumn{1}{r}{6h23min30s} \\
          & + LORA & \textbf{29.81} & \textbf{22.76} & \textbf{37.38} & 1h0min13s & \multicolumn{1}{r}{--} \\
          & + DREditor & 27.68 & 21.21 & 34.03 & \textbf{15.66s} & \multicolumn{1}{r}{\textbf{13min39s}} \\
\cmidrule{2-7}          & \multicolumn{1}{l|}{NFCorpus} & 22.01 & 7.30 & 10.94 & --    & \multicolumn{1}{r}{--} \\
          & + Finetune & 27.75 & 9.79 & 13.08 & 2h20min21s & \multicolumn{1}{r}{1d22h15min10s} \\
          & + LORA & \textbf{30.86} & \textbf{11.24} & \textbf{15.41} & 4h26min20s & \multicolumn{1}{r}{--} \\
          & + DREditor & 23.10 & 7.60 & 11.44 & \textbf{2min13s} & \multicolumn{1}{r}{\textbf{2min13s}} \\
    \bottomrule
    \end{tabular}%
    }
  \label{tab:MResultsExtrinsic}%
\end{table}%

From the results, it can be observed that our method still performs competitively across all three metrics. LoRA can achieve the same or even better performance than fine-tuning, but the time cost is still significant. These results are in line with previous studies that LoRA needs more training time than adapter finetuning, not to mention compared to DREditer.
This is because LoRA's primary savings are in computing memory. It fails to provide an improvement in the time efficiency of model training compared to fine-tuning. In contrast, our method still provides a substantial improvement in time efficiency.

\subsection{The Results of Calibrating $x_q$ Only}
\label{onesizebaby}
In theory, calibrating only $x_q$ is also possible to achieve a comparable performance as calibrating both $x_q$ and $x_a$. To find the difference between them, we show the results of calibrating $x_q$ alone in Table \ref{tab:One-side}. We have observed that making changes to only $x_q$ does not result in a significant difference compared to changing both $x_q$ and $x_a$ simultaneously. Editing both $x_q$ and $x_a$ yields slightly better results. However, we still recommend editing both $x_q$ and $x_a$ to ensure alignment with our optimization objective (Eq \ref{eq3}) and theoretically guarantee its effectiveness.

% Table generated by Excel2LaTeX from sheet 'Sheet11'
\begin{table}[!htb]
  \centering
  \caption{Results of calibrating $x_q$ only.}
  \resizebox{0.45\textwidth}{!}{
      \begin{tabular}{c|c|p{4.04em}|c|c}
    \toprule
    \textbf{Backbone} & \multicolumn{1}{c|}{\textbf{Method}} & \multicolumn{1}{c|}{\textbf{NDCG@10}} & \textbf{MAP@10} & \textbf{Recall@10} \\
    \midrule
    \multirow{9}[12]{*}{SBERT} & \multicolumn{4}{c}{SciFact} \\
\cmidrule{2-5}          & + DREditor$_{q}$ & \multicolumn{1}{c|}{70.18} & 67.35 & \textbf{77.94} \\
          & + DREditor$_{qa}$ & \multicolumn{1}{c|}{\textbf{72.07}} & \textbf{70.00} & 77.26 \\
\cmidrule{2-5}          & \multicolumn{4}{c}{FiQA} \\
\cmidrule{2-5}          & + DREditor$_{q}$ & \multicolumn{1}{c|}{\textbf{26.18}} & \textbf{19.96} & \textbf{32.36} \\
          & + DREditor$_{qa}$ & \multicolumn{1}{c|}{25.22} & 19.18 & 31.06 \\
\cmidrule{2-5}          & \multicolumn{4}{c}{NFCorpus} \\
\cmidrule{2-5}          & + DREditor$_{q}$ & \multicolumn{1}{c|}{30.30} & \textbf{10.73} & 14.20 \\
          & + DREditor$_{qa}$ & \multicolumn{1}{c|}{\textbf{30.37}} & 10.62 & \textbf{14.23} \\
    \midrule
    \multirow{9}[8]{*}{DPR} & \multicolumn{4}{c}{SciFact} \\
\cmidrule{2-5}          & + DREditor$_{q}$ & \multicolumn{1}{c|}{54.35} & 52.33 & \textbf{59.53} \\
          & + DREditor$_{qa}$ & \multicolumn{1}{c|}{\textbf{55.04}} & \textbf{53.12} & 59.26 \\
\cmidrule{2-5}          & \multicolumn{4}{c}{FiQA} \\
\cmidrule{2-5}            & + DREditor$_{q}$ & \multicolumn{1}{c|}{10.10} & 7.14  & 13.54 \\
          & + DREditor$_{qa}$ & \multicolumn{1}{c|}{\textbf{10.18}} & \textbf{7.16} & \textbf{13.78} \\
\cmidrule{2-5}          & \multicolumn{4}{c}{NFCorpus} \\
\cmidrule{2-5}            & + DREditor$_{q}$ & \multicolumn{1}{c|}{18.92} & \textbf{5.39} & 7.99 \\
          & + DREditor$_{qa}$ & \multicolumn{1}{c|}{\textbf{18.96}} & 5.38  & \textbf{8.03} \\
    \midrule
    \multirow{9}[12]{*}{ANCE} & \multicolumn{4}{c}{SciFact} \\
\cmidrule{2-5}          & + DREditor$_{q}$ & \multicolumn{1}{c|}{51.88} & \textbf{47.36} & 64.12 \\
          & + DREditor$_{qa}$ & \multicolumn{1}{c|}{\textbf{51.93}} & 46.79 & \textbf{66.29} \\
\cmidrule{2-5}          & \multicolumn{4}{c}{FiQA} \\
\cmidrule{2-5}          & + DREditor$_{q}$ & \multicolumn{1}{c|}{\textbf{28.41}} & \textbf{21.98} & \textbf{34.67} \\
          & + DREditor$_{qa}$ & \multicolumn{1}{c|}{27.68} & 21.21 & 34.03 \\
\cmidrule{2-5}          & \multicolumn{4}{c}{NFCorpus} \\
\cmidrule{2-5}          & + DREditor$_{q}$ & \multicolumn{1}{c|}{\textbf{23.44}} & \textbf{7.96} & \textbf{11.54} \\
          & + DREditor$_{qa}$ & \multicolumn{1}{c|}{23.10} & 7.60  & 11.44 \\
    \bottomrule
    \end{tabular}%
    }
  \label{tab:One-side}%
\end{table}%

\subsection{More Results of Visualization of the Embedding Projections}
\label{morereulsts}
This section presents additional results on the FiQA and NFCorpus datasets to further illustrate the impact of DREditor on the embeddings of questions and answers, as previously discussed in Section \ref{intrin eva}. Figures \ref{fig:fiqa-sbert} and \ref{fig:nfcorpus-sbert} depict the embedding of test data before and after applying DREditor with backbone SBERT on FiQA and NFCorpus, respectively.

The conclusion drawn in Section \ref{intrin eva} remains unchanged. On the one hand, the DREditor calibrates the embedding of question $x_q$ to the new embedding $x^{'}_q$ that is very similar to the embedding of answers $x_a$. On the other hand, it also shows that the effect of the calibration on the target domain (training) data and the source domain data is different. Interestingly, calibrating the source domain data on these two datasets makes the embedding of questions closer to the answers than the dataset of SciFact. This may indicate that the constructed structured knowledge graph from WikiData and the unstructured text facts from ChatGPT on these datasets are more interconnected than those on SciFact. The reason for this is that the answers in SciFact come from the newest paper, which may not be covered by WikiData and ChatGPT.

\begin{figure*}[!htb]
    \centering
    \includegraphics[width=0.95\textwidth]{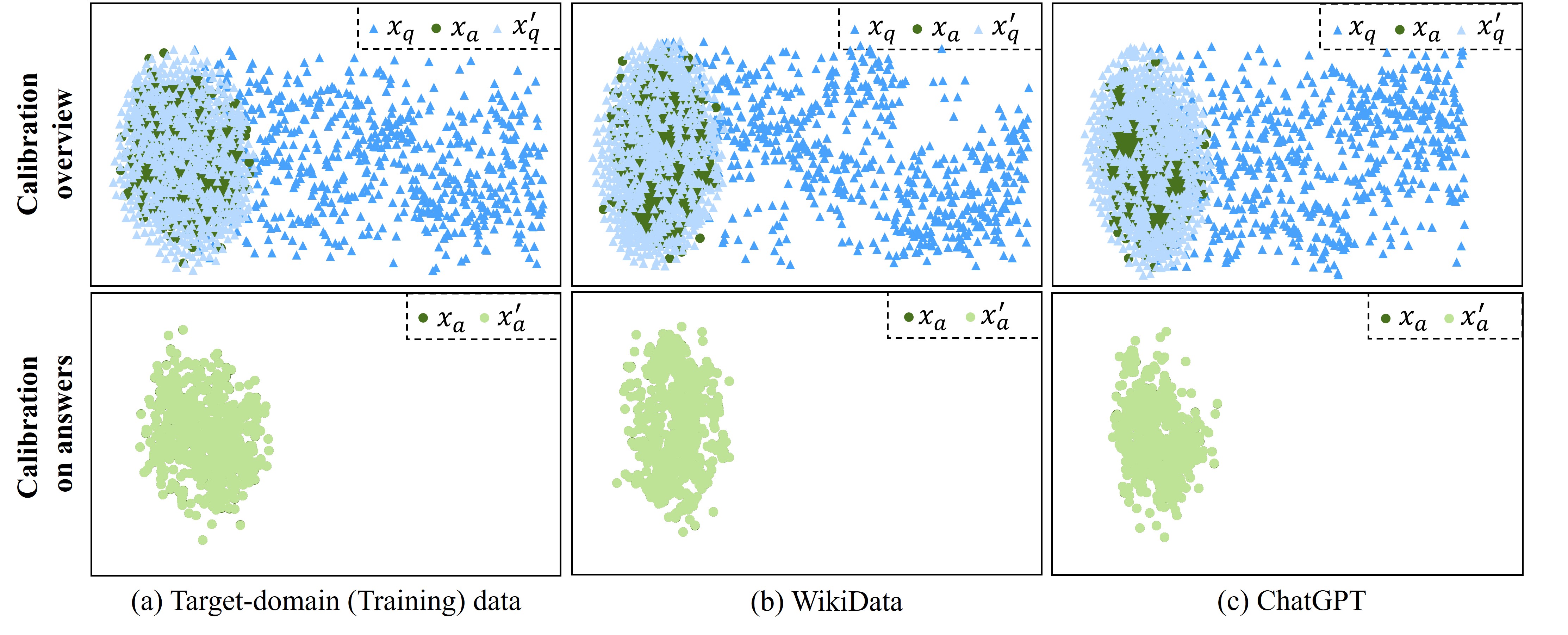}
    \caption{The embedding space on FiQA datasets with backbone SBERT}
    \label{fig:fiqa-sbert}
     \vspace{-3mm}
\end{figure*}

\begin{figure*}[!htb]
    \centering
    \includegraphics[width=0.95\textwidth]{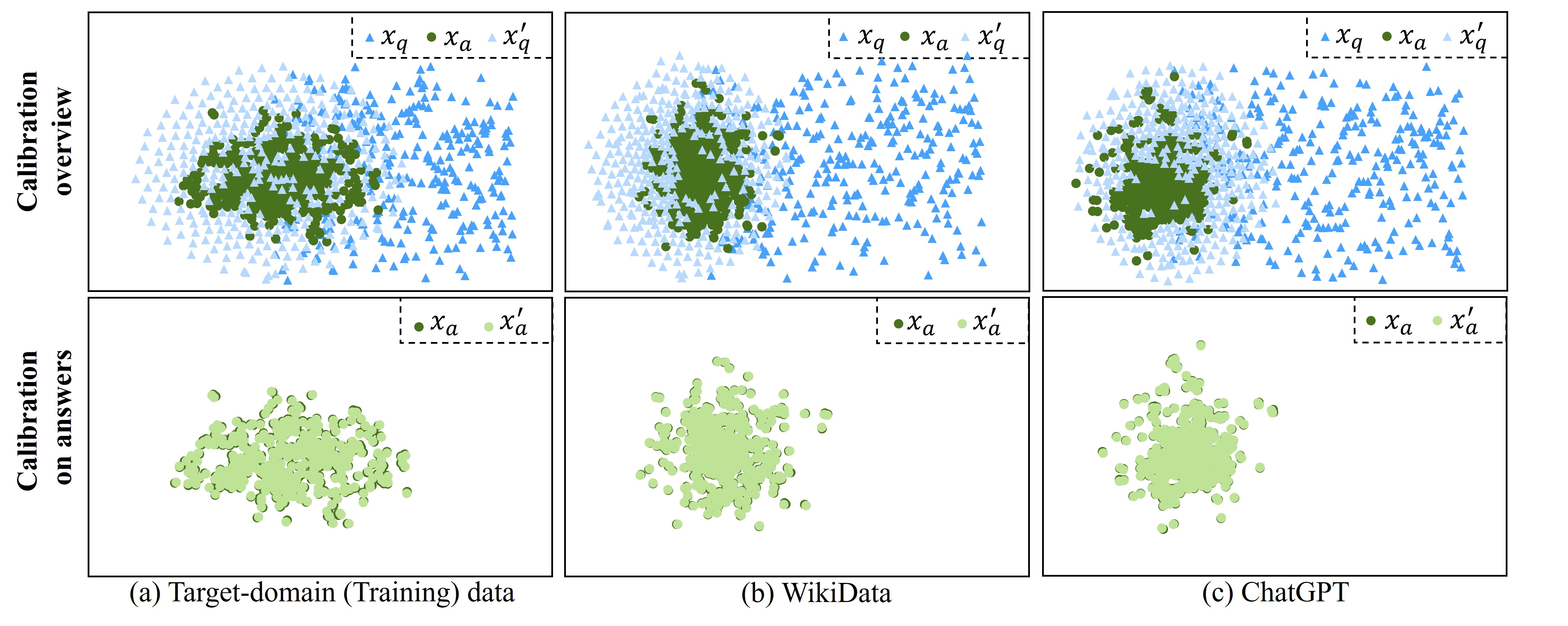}
    \caption{The embedding space on NFCorpus datasets with backbone SBERT}
    \label{fig:nfcorpus-sbert}
    \vspace{-3mm}
\end{figure*}

\section{Details of Datasets and Baselines}
\label{imple}
\label{dataset}
We resort to the datasets and baselines on BEIR \cite{thakur2021beir} to evaluate how well DREditor performs on tasks that are specialized to different domains, including the zero-shot retrieval scenario.

\begin{table*}[!htb]
\centering
\caption{Examples of generated question-answer pair based on WikiData and ChatGPT.}
\label{sec:appendixd}
\resizebox{0.95\textwidth}{!}{%
\begin{tabular}{l|l}
\toprule
\multicolumn{2}{c}{\textbf{Structured knowledge graph from WikiData}} \\  \midrule
\textbf{Knowledge} & \begin{tabular}[c]{@{}l@{}}Triple: (Biomaterial, has quality, Biocompatibility)\\ Masked Triple: (What, has quality, Biocompatibility)\\ (\textit{Entity from TagMe, relation from WikiData})\end{tabular} \\  \midrule
\textbf{Question} & \begin{tabular}[c]{@{}l@{}}What possesses biocompatibility of high quality\\ (Paraphrased by ChatGPT)\end{tabular} \\  \midrule
\textbf{Answer} & \begin{tabular}[c]{@{}l@{}}Any substance that has been engineered to interact with biological systems for a medical purpose\\ (\textit{Description information of Biomaterial, extracted from WikiData})\end{tabular} \\  \midrule
\multicolumn{2}{c}{\textbf{Unstructured text facts from ChatGPT}} \\  \midrule
\textbf{Knowledge} & \begin{tabular}[c]{@{}l@{}}0-dimensional biomaterials show inductive properties\\ (\textit{Sample from test data})\end{tabular} \\  \midrule
\textbf{Question} & \begin{tabular}[c]{@{}l@{}}Factual info. and explanations (\textit{Generated by ChatGPT}): \\0-dimensional biomaterials refer to materials that are extremely small, such as nanoparticles. \\ These materials have been found to have inductive properties, \\  \\
Question (\textit{Generated by ChatGPT}):\\
which means they can stimulate the growth and differentiation of cells and tissues. \end{tabular} \\  \midrule
\textbf{Answer} & \begin{tabular}[c]{@{}l@{}}What are 0-dimensional biomaterials, and how do they stimulate the growth and differentiation of cells and tissues?\\ (\textit{Generated by ChatGPT})\end{tabular} 
\\ \bottomrule
\end{tabular}%
}
\end{table*}

\subsection{Details of Datasets} 
Note that some datasets on BEIR are not publicly available or do not have training data, we omit those datasets on BEIR in our experiments. Consequently, we utilize datasets from three domains, i.e., financial, bio-medical, and scientific domains. Specifically, we use FiQA-2018 \cite{10.1145/3184558.3192301} to conduct the financial evaluation, which is an opinion-based question-answering task. NFCorpus \cite{boteva2016full}, harvested from NutritionFacts, is used for the bio-medical information retrieval task. SciFact \cite{wadden-etal-2020-fact} is used for science-related fact-checking, which aims to verify scientific claims using evidence from the research literature. The statistics of these three datasets are shown in Table \ref{tab:dataset}.
Meanwhile, datasets on BEIR have been well-organized in the form of query-corpus pairs, which we regard as question-answer pairs. In our experiments, we only use test data for evaluation without model training.

\begin{table}[!htb]
  \centering
  \caption{The statistic of three datasets.}
    \resizebox{0.35\textwidth}{!}{
    \begin{tabular}{c|c|c|c}
    \toprule
    \textbf{Datasets} & \textbf{Dataset type} & \textbf{corpus} & \textbf{query} \\
    \midrule
    \multirow{3}[6]{*}{SciFact} & Train & \multirow{3}[6]{*}{5183} & 809 \\
\cmidrule{2-2}\cmidrule{4-4}          & Dev   &       & - \\
\cmidrule{2-2}\cmidrule{4-4}          & Test  &       & 300 \\
    \midrule
    \multirow{3}[6]{*}{FiQA} & Train & \multirow{3}[6]{*}{57638} & 5500 \\
\cmidrule{2-2}\cmidrule{4-4}          & Dev   &       & 500 \\
\cmidrule{2-2}\cmidrule{4-4}          & Test  &       & 648 \\
    \midrule
    \multirow{3}[6]{*}{NFCorpus} & Train & \multirow{3}[6]{*}{3633} & 2590 \\
\cmidrule{2-2}\cmidrule{4-4}          & Dev   &       & 324 \\
\cmidrule{2-2}\cmidrule{4-4}          & Test  &       & 323 \\
    \bottomrule
    \end{tabular}
    }
  \label{tab:dataset}%
\end{table}%

\subsection{Baseline Implementation} %\todo{DPR}
We consider the representative dense-retrieval-based baselines on BEIR, including SBERT \cite{reimers-2019-sentence-bert}, DPR \cite{karpukhin-etal-2020-dense}, and ANCE \cite{xiongapproximate}.
Each baseline is initialized by the public available checkpoint that is pre-trained based on an open-domain corpus, such as MS MARCO\footnote{\url{https://microsoft.github.io/MSMARCO-Question-Answering/}} extracted from the Web and Natural Questions\cite{kwiatkowski-etal-2019-natural} mined from Google Search.
We then fine-tune each baseline on the training data of the corresponding dataset and evaluate the performance of DREditor on the fine-tuned baselines. Meanwhile, to reduce the computation and ensure the fairness of the comparison, we add a layer to the original baseline model and train this layer only.
We utilize their raw dev set for model validation during the fine-tuning.
As for SciFact, it has no dev set, and the size of the training dataset is small. Thus, we split the raw training data into two folds under the division ratio of $1:1$, i.e., the training data for fine-tuning, and the dev set for validation, respectively.
We use \textit{MultipleNagetiveRankingLoss}, implemented in BEIR, to finetune SBERT and ANCE. Specifically, we fine-tune them for 10 epochs using AdamW optimizer, together with a learning rate of $2e-5$ and batch size of $16$.
%We optimize each baseline using the AdamW optimizer and fine-tune it for 10 epochs with a learning rate of $2e-5$. The batch size is $16$ for SBERT and ANCE.
Regarding DPR, we also fine-tune it for 10 epochs using the AdamW optimizer, together with a learning rate of $2e-5$. Due to DPR requiring a larger GPU memory space, we had to reduce its batch size to $8$. Also, we fine-tuned DPR by contrastive loss used in the original paper \cite{karpukhin-etal-2020-dense}, and we randomly selected three negative data for each positive data. As for the SOTA ZeroDR model COCO-DR \cite{yu-etal-2022-coco}, we utilize the open-sourced checkpoint\footnote{\url{https://huggingface.co/OpenMatch/cocodr-base-msmarco}} in our ZeroDR experiments.

\begin{figure*}[t]
    \centering
    \includegraphics[width=0.9\textwidth]{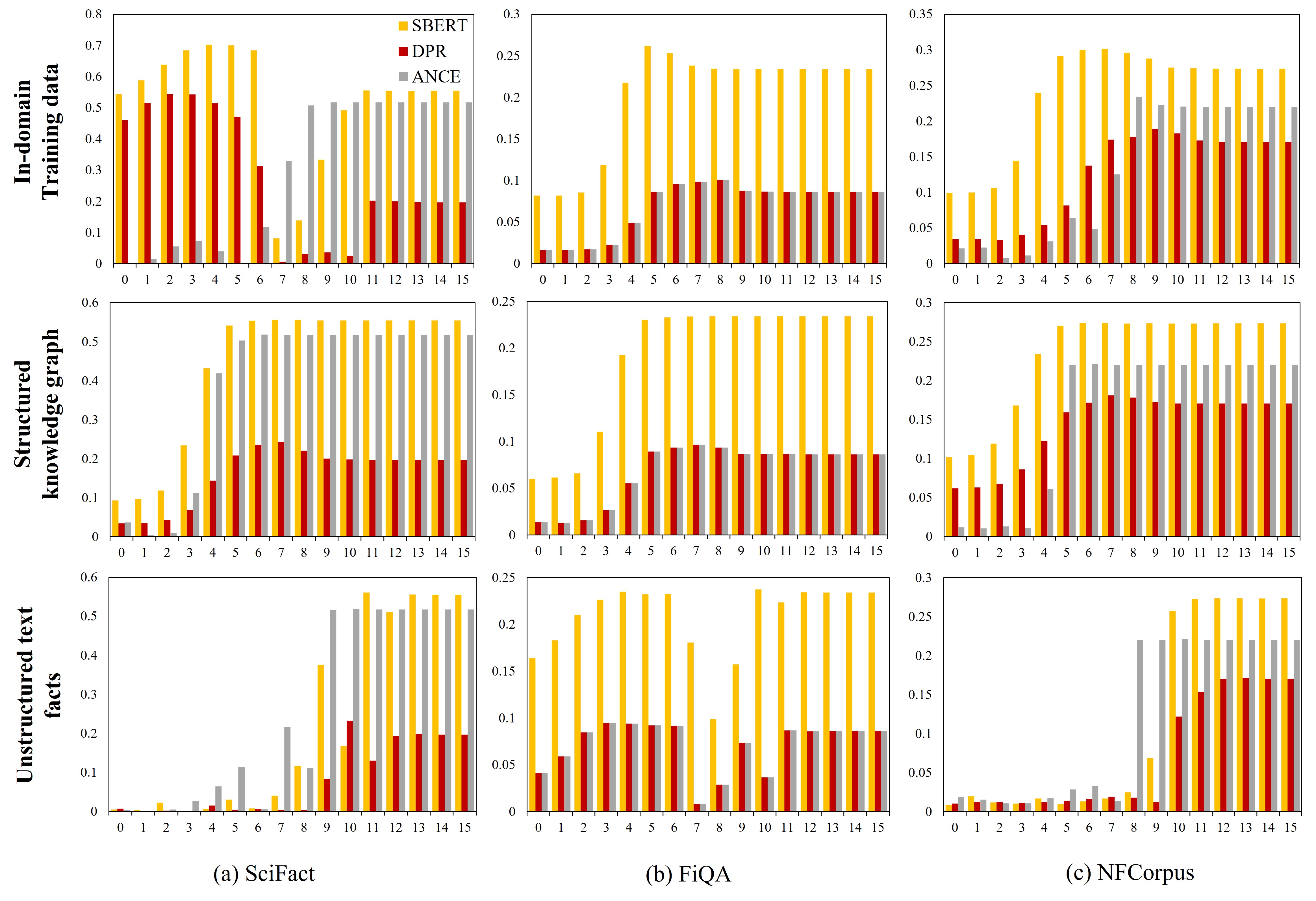}
    \caption{Analysis on Hyper-parameters}
    \label{fig:HYPER}
    \vspace{-3mm}
\end{figure*}

\section{Details on source-domain data representation}
\label{456789}
As discussed in Section \ref{7uj2w3o8e34rweoh}, we consider the structured knowledge graph extracted from WikiData and the unstructured text facts from ChatGPT as the source-domain data for the zero-shot scenario. Here, we show how to transform them into question-answer pairs. Refer to Table \ref{sec:appendixd} for examples.

\begin{itemize}[leftmargin=*]
    \item \underline{Structured knowledge graph from WikiData}. In our experiments, our tasks cover the fields of finance, bio-medicine, and science. We use WikiData to construct a background knowledge graph for each task. In particular, we first resort to TagMe \cite{ferragina2010tagme} to extract entities $h$ and then crawl from WikiData to obtain triplet $(h, r, t)$ and description information of $h$ and $t$. To transform them into question-answer pairs, we mask out head entity $h$ (or tail entity $t$) in a triple with the word "what" and concatenate the triple element together as a sentence $[S]$, which takes the form of "what $r$ $t$" or "$h$ $r$ what". For example, given a fact triple (\textit{Biomaterial}, \textit{has quality}, \textit{Biocompatibility}), we mask out the head entity and then yield "\textit{what has quality Biocompatibility}". Next, ChatGPT is utilized to paraphrase the sentence. Specifically, we launch the following question to ChatGPT "\textit{Paraphrase the following sentence and keep the keywords unchanged: [S]}", and obtain the results as the question. Finally, the corresponding answer is the description information of the masked entity from WikiData. Namely, given $[S]$="\textit{what has quality Biocompatibility}", ChatGPT receives a question: "\textit{Paraphrase the following sentence and keep the keywords unchanged: what has quality Biocompatibility}", and it returns a question: "\textit{What possesses biocompatibility of high quality}".
    \item \underline{Unstructured text facts from ChatGPT}. It is known that large pre-trained language models store factual knowledge present in the pre-training data \cite{petroni2019language, wang2020language}. In our experiments, we construct an open-domain background knowledge for each task using ChatGPT. In particular, given a sample text $[T]$ from test data (e.g., a query or a corpus), we require ChatGPT to offer factual information and explanations based on a template "\textit{Give me the factual information and explanations about the following statement or question: [T]}". The response, denoted as $a$, is further used to generate a question $q$ based on another template "\textit{Read the following factual information and generate a question: 1) $a$}". In this way, a question-answer pair that implies the factual knowledge of ChatGPT is obtained.
\end{itemize}

\begin{table}[!htb]
  \centering
  \caption{The statistics of the source-domain data.}
\resizebox{0.35\textwidth}{!}{%
    \begin{tabular}{c|c|c|c}
    \toprule
    \textbf{Datasets} & \textbf{Data type} & \textbf{KG}    & \textbf{ChatGPT} \\
    \midrule
    \multirow{2}[4]{*}{SciFact} & question & 5951  & 300 \\
\cmidrule{2-4}          & answer & 26369 & 300 \\
    \midrule
    \multirow{2}[4]{*}{FiQA} & question & 6708  & 647 \\
\cmidrule{2-4}          & answer & 26694 & 647 \\
    \midrule
    \multirow{2}[4]{*}{NFCorpus} & question & 3511  & 323 \\
\cmidrule{2-4}          & answer & 15950 & 323 \\
    \bottomrule
    \end{tabular}%
}
  \label{tab:number}%
   \vspace{-3mm}
\end{table}%

In practice, as ChatGPT requires a fee and has been banned in some countries, we use Monica\footnote{\url{https://monica.im/}} for annotations, which is a Chrome extension powered by ChatGPT API. As free users of Monica, we have a daily usage limit. Thus, we require Monica to annotate multiple instances at a time (precisely, 20 at a time). 
Here, we present the specific number of the \underline{Structured Knowledge Graph} from WikiData (denoted as KG) and \underline{Unstructured Text Facts} from ChatGPT (denoted as ChatGPT) constructed on each dataset, as shown in Table \ref{tab:number}.

\section{Analysis on Hyper-parameters}
\label{hyper}

This section examines the properties of the hyper-parameter $\lambda = \hat\beta^2 n$ in $A = \lambda \sum_{i=1}^n \frac{1}{n} x^i_a {x^i_a}^{T}$ of Eq (\ref{eq4}).
While we choose $\lambda$ based on the dev set performance in our main experiments, we reveal some interesting observations related to it. ($n$ is adjusted by the ratio of the dev set and train set).
We assess the impact of varying $\lambda$ values on performance across different datasets and tasks, as illustrated in Figure \ref{fig:HYPER}. Here, the x-axis in Figure \ref{fig:HYPER} represents the power of 10 for $\lambda$, while the y-axis represents the value of nDCG@10.

Increasing $\lambda$ initially improves performance until it reaches a peak and then decreases to a stable level. This is because a small $\lambda$ value reduces the weight of the second term in Eq (\ref{eq3}) and Eq (\ref{finaleq}), causing significant changes to $x_a$ and resulting in incorrect retrieval outcomes. Conversely, a large $\lambda$ value preserves the original relationship and avoids editing the matching rules that should be changed, thereby maintaining the original performance value.

When using unstructured text facts for editing, a small $\lambda$ value leads to poor effectiveness, which is an extreme scenario due to the significant lack of relevance between unstructured text facts and the specific domain. Consequently, the small $\lambda$ cannot maintain the embedding of the answers $x_a$ unaltered. Therefore, only when $\lambda$ becomes sufficiently large can \textit{DREitor} preserve the original relationship as much as possible and extract useful information at the same time.

\end{document}